\theoremstyle{plain}
\newtheorem{theorem}{Theorem}[section]
\newtheorem{corollary}[theorem]{Corollary}
\newtheorem{lemma}[theorem]{Lemma}
\newtheorem{proposition}[theorem]{Proposition}
\theoremstyle{definition}
\newtheorem{definition}[theorem]{Definition}
\newtheorem{example}[theorem]{Example}
\newtheorem{remark}[theorem]{Remark}
\title{Lifting a CSS code via its handlebody realization}
\author{Virgile Guemard$^{1,2}$}
\date{
	$^1$Aix Marseille Université, I2M, UMR 7373, 13453 Marseille, France\\%
        $^2$Inria Paris, France \\[2ex]%
	\today
}
\renewcommand{\fnum@figure}{FIG. \thefigure}
\NewDocumentCommand{\evalat}{sO{\big}mm}{%
  \IfBooleanTF{#1}
   {\mleft. #3 \mright|_{#4}}
   {#3#2|_{#4}}%
}
\renewcommand{\thesubsection}{\thesection.\arabic{subsection}}
\renewcommand{\thesection}{\arabic{section}}
\renewcommand{\thesubsubsection}{\thesubsection.\arabic{subsubsection}} 
\begin{document}
	\maketitle

\begin{abstract}
We present a topological approach to lifting a quantum CSS code. In previous work \cite{GuemardLiftIEEE}, we proposed lifting a CSS code by constructing covering spaces over its 2D simplicial complex representation, known as the Tanner cone-complex. This idea was inspired by the work of Freedman and Hastings \cite{Freedman2020CSS_Manifold}, which associates CSS codes with handlebodies. In this paper, we show how the handlebody realization of a code can also be used to perform code lifting, and we provide a more detailed discussion of why this is essentially equivalent to the Tanner cone-complex approach. As an application, we classify lifts of hypergraph-product codes via their handlebody realization.
\end{abstract}

\tableofcontents

\section{Introduction}

Quantum error-correcting codes are believed to be a necessary component of most large-scale quantum computing architectures. Among the wide variety of error correction schemes, Calderbank–Shor–Steane (CSS) codes~\cite{Calderbank1996,Shor1995} have attracted considerable attention since their discovery, due to their simple mathematical formulations~\cite{Freedman2002,Kitaev2003}, namely chain complexes, which have been utilized to develop new topological and combinatorial code constructions, as well as protocols for their practical implementation.\par 

In recent years, major progress has been made in the construction of asymptotically good low-density parity-check (LDPC) CSS codes, which are CSS codes defined by a pair of sparse parity-check matrices. This began with the geometrical intuition of Hastings, Haah, and O'Donnell~\cite{Hastings2020}, that the distance of a hypergraph-product code (HPC), a code defined by the tensor product of two linear codes, could be increased by twisting one of the two codes along the other. The formalization of this idea, analogous to a fiber bundle, led to significant progress in \cite{Panteleev2020}, where the distance was shown to grow almost linearly. Soon after, \cite{Breuckmann2021} proposed that the tensor product of Sipser-Spielman codes over non-Abelian Cayley graphs could yield a family of codes with linear distance. This eventually resulted in a family of LDPC codes, called lifted product codes (LPC) \cite{Panteleev2021}, whose dimension and distance scale linearly with the code length, thus referred to as \textit{good} codes. This construction was later simplified in \cite{Leverrier2022,leverrier2022decoding}, where a connection to the classical locally testable code family of \cite{Dinur2021} was also established.
\par

This progress in quantum coding theory has had a positive impact on geometry, notably through the work of Freedman and Hastings~\cite{Freedman2020CSS_Manifold}. Using LDPC code families with distance scaling as $n^\alpha$, for $\alpha > 1/2$, they proved the existence of a family of 11-dimensional closed, compact Riemannian manifolds with power-law $\mathbb{Z}_2$-systolic freedom. In their work, they present an explicit procedure that takes as input any CSS code, i.e. a length-3 chain complex with $\mathbb{Z}_2$ coefficients, and constructs a handlebody such that the code appears as a subcomplex of its cellular chain complex.\par 

This manifold construction served as inspiration in~\cite{GuemardLiftIEEE} to define a general notion of code lifting for quantum CSS codes: it is indeed always possible to generate a covering space over the manifold realization of a code and extract a new code from it. However, while this intuition was used as a guide to construct and analyze new codes, it was found that a simpler way to generate a lifted code is to create a covering of a newly discovered simplicial complex, called the Tanner cone-complex\footnote{Nevertheless, the present note was essentially written prior to~\cite{GuemardLiftIEEE}.}. In particular, this avoids a cumbersome step from~\cite{Freedman2020CSS_Manifold} referred to as the $\mathbb{Z}$-lift of a chain complex.\par

The present work aims to clarify the parallel between the Freedman–Hastings manifold and the Tanner cone-complex of a code, as defined in~\cite{GuemardLiftIEEE}. This correspondence naturally explains the relationship between a covering of the manifold and a lift of the code. In particular, for certain families of codes, generating a covering of the manifold associated with a code is equivalent, although harder to implement, to the code lifting of~\cite{GuemardLiftIEEE}.\par

Our motivation is also to translate the code lifting principle into the language of topological codes. Indeed, recent work motivates further investigation of topological codes and the Freedman–Hastings construction, as in~\cite{portnoy23,zhu2024noncliffordparallelizablefaulttolerantlogical,zhu2025topologicaltheoryqldpcnonclifford}, where new codes have been developed under additional constraints, such as geometric locality or the existence of transversal non-Clifford logical gates based on the cup product. Another application is the efficient computation of sparse $\mathbb{Z}$-lifts for certain codes. Let $C'$ be a lift, in the sense of~\cite{GuemardLiftIEEE}, of a CSS code $C$ such that $C$ admits a support-preserving $\mathbb{Z}$-lift, as defined in Section \ref{section cellular-lift of a CSS code}. Then, our procedure yields a support-preserving $\mathbb{Z}$-lift of $C'$.\par 

This article is organized as follows. In Section~\ref{section Preliminaries}, we recall essential definitions related to CSS codes and code lifting. Section~\ref{section the freedman hastings construction} reviews the CSS-to-manifold mapping introduced by Freedman and Hastings. In Section~\ref{section cellular-lift of a CSS code}, we provide an explicit code lifting procedure based on this manifold mapping. We also elaborate on the connection between this framework and the code lifting technique proposed in~\cite{GuemardLiftIEEE}. Finally, we show how to adapt the classification of LPCs as lifts of HPCs using topological arguments.

\section{Preliminaries}\label{section Preliminaries}

\subsection{Quantum CSS codes}\label{section quantum CSS codes}

CSS codes are a class of stabilizer quantum error-correcting codes. The principles first appeared in the seminal works of Calderbank, Shor, and Steane~\cite{Calderbank1996,Steane1996,Stean1996Multiple}. These codes are defined by a pair of classical linear codes, or equivalently by two parity-check matrices that satisfy a specific orthogonality condition. Soon after, it was recognized that this condition could be formulated in the language of chain complexes~\cite{Kitaev2003,Freedman2001}, opening new perspectives for code construction and analysis; see~\cite{Audoux2015,Breuckmann2021review} for a detailed review.

\begin{definition}\label{Def_CSS stabilizer code}
Let $C_X$ and $C_Z$ be two linear codes with parity-check matrices $H_X$ and $H_Z$, such that $C_X^\perp\subseteq C_Z \subset \mathbb{F}_2^{n}$. A $[[ n, k, d]]$ CSS code is the subspace of $(\mathbb{C}_2)^{\otimes n}$ given by\[\text{CSS}(C_X,C_Z):=\operatorname{Span}\left \{\sum_{u\in C_Z^\perp }\ket{c+u}\text{ }|\text{ } c\in C_X \right \}.\]
Its parameter are:
\begin{itemize}
\item the \textit{code length} $n$,
\item the \textit{dimension} $k = \dim(C_X / C_Z^\perp)=n-\operatorname{rank} H_X-\operatorname{rank} H_Z$,
\item the \textit{distance} $d = \min(d_X, d_Z)$, with the $X$ and $Z$ distances,
\begin{align*}
d_X=\min\limits_{c\in C_Z\setminus C_X^\perp}|c|,\qquad d_Z=\min\limits_{c\in C_X\setminus C_Z^\perp}|c|.
\end{align*}
\end{itemize}
A family of CSS codes is said $good$ when its parameters are, asymptotically, $[[ n, \Theta(n), \Theta(n)]] $.  A family is called LDPC if $H_X$ and $H_Z$ are sparse matrices, i.e. rows and columns have weight bounded by a constant.
\end{definition}
The dimension corresponds to the number of encoded logical qubits, while $\lfloor (d-1)/2 \rfloor$ is the number of correctable errors. A high-level objective is to design quantum CSS codes with the largest possible dimension and distance for a given number of physical qubits $n$.\par

Physically, the rows of $H_X$ define the $X$-type parity-check operators, or $X$-checks, and similarly, the rows of $H_Z$ define the $Z$-checks. The orthogonality condition between $C_X$ and $C_Z$, equivalent to $H_X H_Z^T = 0$, is necessary to ensure that the syndromes of the two codes can be extracted independently, i.e., via commuting operators. For simplicity, we denote the sets of $Z$- and $X$-checks\footnote{In this work, by a $Z$-check, we always mean a generator corresponding to a row of $H_Z$, and likewise for $X$-checks and $H_X$.} as $Z$ and $X$, respectively, corresponding to the rows of $H_Z$ and $H_X$. We let $Q$ denote the set of qubits, corresponding to the columns of $H_X$ and $H_Z$. Thus, $|Q| = n$ is the number of qubits, while $|X|$ and $|Z|$ denote the number of rows of $H_X$ and $H_Z$, respectively.\par

The relation $H_X H_Z^T = 0$, known as \textit{the orthogonality condition}, is analogous to the composition property of two boundary maps in a chain complex. Since the data of two parity-check matrices is sufficient to construct a CSS code, it is naturally defined as a \textit{3-term chain complex} or its dual.

\begin{proposition}
A CSS code $C=\operatorname{CSS}(C_X,C_Z)$, defined by a pair of parity check matrices $H_X, H_Z$, is equivalent to a 3-term complex given with a basis, $C:=
C_{i+1}\xrightarrow{\partial_{i+1}}C_i\xrightarrow{\partial_i}C_{i-1}=\mathbb{F}_2^{|Z|}\xrightarrow{H_Z^T}\mathbb{F}_2^{|Q|}\xrightarrow{H_X} \mathbb{F}_2^{|X|}$, where $i\in \mathbb Z$. Its parameters $ [[ n, k, d]]$ are  
\begin{align*}
    n=&\dim(C_i),\\
    k=&\dim(H_i(C))=\dim(H^i(C)),\\
    d=&\min \left \{|c|: [c]\in H_i(C)\sqcup H^i(C), [c]\neq0\right \},
\end{align*}
where $H_i(C)$ denotes the $i$-th homology group of $C$.
\end{proposition}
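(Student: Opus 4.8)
The plan is to exhibit an explicit dictionary between a CSS code and a based $3$-term $\mathbb{F}_2$-complex, and then to translate each of the three parameters across this dictionary. Writing $Z_j:=\ker\partial_j$ and $B_j:=\operatorname{im}\partial_{j+1}$ for cycles and boundaries, I would first record that the orthogonality condition $H_XH_Z^T=0$ is literally the relation $\partial_i\circ\partial_{i+1}=0$, so the data $(H_X,H_Z)$ of Definition~\ref{Def_CSS stabilizer code} is the same as the data of a complex $\mathbb{F}_2^{|Z|}\xrightarrow{H_Z^T}\mathbb{F}_2^{|Q|}\xrightarrow{H_X}\mathbb{F}_2^{|X|}$ together with the standard bases of the three spaces (the degree shift $i\in\mathbb Z$ being immaterial). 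Under this dictionary $C_X=\ker H_X=Z_i$, and since $C_X^\perp$, $C_Z^\perp$ are the row spaces of $H_X$, $H_Z$, we get $C_Z^\perp=\operatorname{im}H_Z^T=B_i$ and $C_X^\perp=\operatorname{im}H_X^T$; the hypothesis $C_X^\perp\subseteq C_Z$ is then just a restatement of $H_XH_Z^T=0$, confirming consistency.

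For the dimension, I would read off $H_i(C)=Z_i/B_i=\ker H_X/\operatorname{im}H_Z^T=C_X/C_Z^\perp$, whose dimension is $k$ by Definition~\ref{Def_CSS stabilizer code}; rank-nullity gives the equivalent expression $k=\dim C_i-\operatorname{rank}\partial_i-\operatorname{rank}\partial_{i+1}=n-\operatorname{rank}H_X-\operatorname{rank}H_Z$. For the cohomology, since we work over the field $\mathbb{F}_2$ with chosen bases, the cochain complex $\operatorname{Hom}(C,\mathbb{F}_2)$ is canonically the transpose complex $\mathbb{F}_2^{|X|}\xrightarrow{H_X^T}\mathbb{F}_2^{|Q|}\xrightarrow{H_Z}\mathbb{F}_2^{|Z|}$, whose homology at the middle term is $H^i(C)=\ker H_Z/\operatorname{im}H_X^T=C_Z/C_X^\perp$. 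A second rank-nullity count yields $\dim H^i(C)=n-\operatorname{rank}H_Z-\operatorname{rank}H_X=k$, so $\dim H_i(C)=\dim H^i(C)$; this is just the universal-coefficient statement over a field, cheap enough to do by hand.

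Finally, for the distance I would observe that the elements appearing in $d_Z=\min_{c\in C_X\setminus C_Z^\perp}|c|$ are exactly the nonzero elements of $C_X$ lying in some nontrivial coset of $C_Z^\perp$, i.e. the representatives of the nonzero classes of $H_i(C)=C_X/C_Z^\perp$; hence $d_Z=\min\{|c|:[c]\in H_i(C),\,[c]\neq 0\}$. The same argument applied to the transpose complex identifies $d_X=\min_{c\in C_Z\setminus C_X^\perp}|c|$ with $\min\{|c|:[c]\in H^i(C),\,[c]\neq 0\}$, and taking the overall minimum $d=\min(d_X,d_Z)$ gives the stated formula.

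I do not anticipate a serious obstacle: the only points needing care are the identification of the cochain complex with the transpose complex (with the accompanying equality $\dim H^i=\dim H_i$), and the verification that "minimum weight over all representatives of all nonzero homology classes" really coincides with "minimum weight over $C_X\setminus C_Z^\perp$" rather than over coset leaders only — both routine once the dictionary above is fixed.
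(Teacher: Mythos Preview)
Your proposal is correct and complete. The paper itself states this proposition without proof, treating it as a standard reformulation of Definition~\ref{Def_CSS stabilizer code}; your dictionary and the rank-nullity/transpose-complex computations are exactly the routine verification one would supply.
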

Assuming that a basis is fixed for each vector space $C_i$, the boundary maps can be interpreted as matrices. They play the role of the parity-check matrices $H_X$ and $H_Z$ or their transpose, and the linear codes are the subspaces $C_X = \operatorname{ker}(\partial_i)$ and $C^{\perp}_Z = \operatorname{Im}(\partial_{i+1})$.\par

For code constructions known as topological codes~\cite{Kitaev2003}, it has been shown to be useful to extract quantum CSS codes from any based chain complex. Indeed, a chain complex of length greater than three can be truncated into a 3-term one. This aspect will also be used later in this note to extract a code from a cell complex of dimension 5.\par

It will also be appropriate to define the chain complex associated with a code in terms of abstract cells taken directly from the sets of checks and qubits, such as $C = \mathbb{F}_2 Z \xrightarrow[]{\partial_2} \mathbb{F}_2 Q \xrightarrow[]{\partial_1} \mathbb{F}_2 X$, where $\mathbb{F}_2 S = \bigoplus_{s \in S} \mathbb{F}_2 s$ denotes the formal linear combination, called \textit{chains}, of abstract basis cells in the sets $S = X, Q$, or $Z$. A boundary map is defined by $\partial_i s = \sum_{t \in \operatorname{supp}(s)} t$. In this context, a check, which is an operator, and a qubit, a vector in $\mathbb{C}^2$, are identified with the abstract cells representing them, denoted in lowercase such as $x \in X$ or $z \in Z$. Consequently, for a check $s$, $\operatorname{supp}(s)$ refers to the support of the row vector representing the check in the corresponding parity-check matrix, which can be identified with $\operatorname{supp}(\partial s)$, the support of the chain $\partial s$. These cells will also represent certain geometrical objects involved in the Freedman-Hastings handlebody realization of a code. In this setup, the boundary of a cell has a geometrical interpretation.\par

\subsection{Tanner-lift, cellular-lift and $\mathbb Z$-lift}\label{section Tanner-lift, cellular-lift and Z-lift}

Throughout this note, we use the term "lift" to refer to several distinct mathematical procedures. In this section, we provide a high-level overview of the main types of lifts considered: the Tanner lift, the FH lift, and the $\mathbb{Z}$-lift. Each of these constructions shares a common feature: starting from a length-3 $\mathbb{F}_2$-chain complex given with a basis, the lift replaces the field coefficients with elements from a ring or with matrix representations of group algebra elements, while preserving the orthogonality of the boundary maps.\par

We begin by recalling the definition of code lifting introduced in~\cite{GuemardLiftIEEE}, which we refer to as the \textit{Tanner lift}. For the definition of the Tanner cone-complex associated with a CSS code, see~\cite[Section~3.2]{GuemardLiftIEEE}.

\begin{definition}[Tanner-lift of a CSS code, \cite{GuemardLiftIEEE}]\label{definition lift of quantum CSS code}
Let $C$ be a CSS code, with Tanner graph $T:=\mathcal{T}(C)$, and Tanner cone-complex $K:=\mathcal{K}(C)$. Let $p:K'\rightarrow K $ be a finite cover and $p|_T:T'\to T$ its restriction to the Tanner graph. The \textit{Tanner-lift} of $C$ associated to $p$ corresponds to the CSS code $C'$ such that $\mathcal{K}(C')=K'$ (or equivalently with Tanner graph $T'=\mathcal{T}(C')$, sets of checks $X'=p^{-1}(X)$, $Z'=p^{-1}(Z)$ and qubits $Q'=p^{-1}(Q)$. A \textit{connected} lift of $C$ is a lift defined by a connected cover of $K$. A \textit{trivial lift }of $C$ corresponds to disjoint copies of this code.
\end{definition}
This technique has proven useful for constructing LDPC codes. Notably, the maximum check weight of $C'$ remains equal to that of $C$, a consequence of the properties of covering maps. Moreover, this framework naturally generalizes the classical code-lifting technique. Indeed, when $C$ is a linear code, the definition coincides with the standard lifting procedure, which corresponds to taking a covering of its Tanner graph.\par

For any code $C$, the Freedman-Hastings procedure \cite{Freedman2020CSS_Manifold} constructs a cellulated 11-dimensional manifold which includes the code in its cellular chain complex. Freedman and Hastings added a step in the construction to make it simply-connected. Skipping this step is crucial for code lifting since connected coverings of a manifold are in correspondence with subgroup of its fundamental group, by Theorem \ref{Theorem Galois correspond}. Because the construction is not unique, we denote the set of all possible handlebody realizations of $C$ by $\mathcal{M}(C)$, and a fixed choice of manifold realization by $M\in \mathcal M(C)$ . Then $C$ is the length-3 subcomplex of middle dimension 4 of the $\mathbb Z_2$-cellular chain complex of $M$. We will detail the construction in Section \ref{section the freedman hastings construction}.

\begin{definition}[cellular-lift]
Let $C$ be a CSS code, and let $M\in\mathcal{M}(C)$ be a fixed handlebody realization of $C$. Let $p:M'\rightarrow M $ be a finite cover. The \textit{cellular-lift }of $C$ associated to $p$ corresponds to the CSS code $C'$, such that $C'$ is the length-3 subcomplex of middle dimension 4 of the $\mathbb Z_2$-cellular chain complex of $M'$, i.e $M'\in \mathcal{M}(C')$.
\end{definition}
In Section~\ref{section Cellular realization of a code}, instead of considering a manifold realization, we examine a simplified version: a cellular complex of dimension 5 that realizes the code in dimensions 3, 4, and 5. Constructing a covering over this simplified cell complex gives a definition of the lift equivalent to the one given here. We will therefore identify the two notions.\par

A key distinction between the Tanner-lift and the cellular-lift is that the latter necessitates a preliminary step: constructing a \textit{$\mathbb{Z}$-lift} of the code.

\begin{definition}
Given a $\mathbb{F}_2$-chain complex $C$ endowed with a basis, and boundary maps $(\partial_n)_{n\in \mathbb{N}}$ interpreted as finite matrices, a $\mathbb{Z}$-lift of $C$ is a chain complex $\widetilde{C}$ with integer coefficients, defined by a sequence of boundary operators $(\widetilde{\partial}_n)_{n\in \mathbb{N}}$, such that, for all $n$, $\widetilde{\partial}_n = \partial_n \: (\operatorname{mod} 2)$.
\end{definition}
It was shown in~\cite{Freedman2020CSS_Manifold} that any finite-dimensional 3-term complex over $\mathbb{F}_2$, representing the code $ C := \mathbb{F}_2^{|Z|} \xrightarrow{H_Z^T} \mathbb{F}_2^{|Q|} \xrightarrow{H_X} \mathbb{F}_2^{|X|}$, can be $\mathbb{Z}$-lifted into another chain complex $\widetilde{C} := \mathbb{Z}^{|Z|} \xrightarrow{\tilde{H}_Z^T} \mathbb{Z}^{|Q|} \xrightarrow{\tilde{H}_X} \mathbb{Z}^{|X|}$. In general, a $\mathbb{Z}$-lift is not unique and can be given extra constraints, such as being torsion-free conditions or sparse\footnote{the sum of the absolute values of the coefficients over rows and columns is $O(1)$ in the number of qubits, when considering families of sparse chain complexes}. Moreover, we will be interested in \textit{support-preserving $\mathbb{Z}$-lifts} in Section \ref{section Relation between the cellular-lift and the Tanner-lift}.

\begin{definition}
    We call a $\mathbb Z$-lift of a CSS code support-preserving if it modifies only the non-zero coefficients of the parity-check matrices (into odd integers). 
\end{definition}

It is essential to note that certain chain complexes do not admit a support-preserving $\mathbb{Z}$-lift, as shown in Appendix~\ref{section A CSS code admitting no support preserving Z-lift}. We will come back to this issue in Section \ref{section Relation between the cellular-lift and the Tanner-lift} \par

\section{The Freedman-Hastings construction}\label{section the freedman hastings construction}

\subsection{Handlebody realization of a CSS code}\label{section Handlebody realization of a CSS code}

In~\cite{Freedman2020CSS_Manifold}, Freedman and Hastings proposed a method to construct a simply-connected, closed, oriented Riemannian 11-dimensional manifold inheriting its $\mathbb{Z}_2$-systolic properties from the $X$ and $Z$ relative distances of a quantum CSS code $C$. Here, we start by reviewing their work. In this section and the next, $n$, $k$, and $d$ refer to the dimensions of geometrical objects, and not to the parameters of a code.\par

Any 3-term complex over $\mathbb{F}_2$, here representing a CSS code $C := \mathbb{F}_2^{|Z|} \xrightarrow{\partial_2} \mathbb{F}_2^{|Q|} \xrightarrow{\partial_1} \mathbb{F}_2^{|X|}$, can be $\mathbb{Z}$-lifted into another complex $\widetilde{C}$. The principle of~\cite{Freedman2020CSS_Manifold} is to use $\widetilde{C}$ as an incidence complex, around which a handle decomposition of the manifold is built,
\[ \text{span}_\mathbb{Z}( 11\text{-handles}) \xrightarrow{\text{attaching}}  \text{span}_\mathbb{Z}( 10 \text{-handles})  \xrightarrow{\text{attaching}} \dots \xrightarrow{\text{attaching}}
 \text{span}_\mathbb{Z}( 0 \text{-handles}).\]
Such a complex is referred to as a handle complex. A $\mathbb{Z}$-lift is required because the information contained in an $\mathbb{F}_2$-chain complex is not enough to define attaching maps or ensure orientability of the manifold.\par

In the handle complex, the qubits correspond to the 4-handles, while the $X$ and $Z$-checks correspond to the 3 and 5-handles, respectively. The lifted code $\widetilde{ C }$ is then isomorphic to the 3-term complexes
\[ \text{span}_\mathbb{Z}( (d+5) \text{-handles}) \xrightarrow{\text{attaching}}  \text{span}_\mathbb{Z}( (d+4) \text{-handles})  \xrightarrow{\text{attaching}}
 \text{span}_\mathbb{Z}( (d+3) \text{-handles}), \]
for $d=0$ or $3$, by duality.\par

A more familiar correspondence between the original chain complex of the code, and the $k$-handle structure of the manifold is obtained through deformation retract of $k$-handles into $k$-cells; we will explain this aspect in Section \ref{section Cellular realization of a code}.\par

The dimension of the manifold, and the indices of the handles, have been chosen high enough to ensure a separation between the handles coming from the $\mathbb{Z}$-lifted code, of index 3, 4, 5, and dually 6, 7, 8, and the handles of index 0, 1, 2, and dually 9, 10, 11, whose purpose is to organize the attachments and give the right incidence numbers between the handles of index 3, 4, 5, and 6, 7, 8. In particular, it avoids the appearance of extra unwanted homology in dimensions 4 and 7.\par

When a code family $\{C_i\}_\mathbb{N}$ is LDPC and admits sparse $\mathbb{Z}$-lifts $\{\widetilde{ C }_i\}_\mathbb{N}$, it guarantees the existence of an associated family of manifolds $\{M_i\}_\mathbb{N}$ with a bounded Riemannian metric. While this is a key aspect in \cite{Freedman2020CSS_Manifold}, for our purposes, it is not necessary for the codes to be LDPC. We will explain why in Section \ref{section cellular-lift of a CSS code}.\par

We recall some standard definitions and summarize the construction from \cite{Freedman2020CSS_Manifold}. An $n$-dimensional handlebody is a decomposition of a manifold into a union of $n$-dimensional handles, with indices ranging from $0$ to $n$. A $n$-dimensional $k$-handle can be described as a pair
\begin{equation*}
    h^k=\{D^k\times D^{n-k},\partial D^k\times D^{n-k}\},
\end{equation*}
where the second component of the boundary $\partial h^k = S^{k-1} \times D^{n-k} \cup D^k \times S^{n-k-1}$ is called the co-attaching region. This region remains available after attachment, allowing the gluing of handles of higher indices. \par

The construction of a handlebody begins with 0-handles, $(D^0 \times D^n, \varnothing)$, which have an empty attaching region. Handles of index $k$ are then attached along their attaching region to the co-attaching region of handles of index $k-1$. In \cite{Freedman2020CSS_Manifold}, the final step involves attaching 11-handles, which have an empty co-attaching region, to close the 11-dimensional manifold. \par

While the standard approach is to attach handles in order of increasing index, this is not strictly necessary. Handles can always be reordered by index using handle-slides.

A handle decomposition can be turned upside-down. A handle of index $k$ becomes a dual handle, i.e., one of index $n-k$. The resulting handlebody is isomorphic to the previous one, and the local geometry is preserved. This result follows from arguments on Morse functions. Moreover, contact between handles is a symmetric relation, and one obtains a dual handle-complex.\par

The boundary maps of the handle-complex are given by the incidence numbers of pairs of handles $h^k$ and $h^{k-1}$. These incidence numbers correspond to the intersection number between the \textit{attaching sphere} of $h^k$ and the \textit{belt sphere} of $h^{k-1}$. For a handle $h^k$, the attaching sphere is the boundary $S^{k-1} \times 0$ of its \textit{core} $D^k \times 0$, and the belt sphere is the boundary $0 \times S^{n-k-1}$ of its \textit{co-core} $0 \times D^{n-k}$, where $0$ denotes the center of the disks $D^k$ or $D^{n-k}$.\par

A cell decomposition is obtained by replacing each handle with an equivalent cell, which is done by retracting each handle onto its core and attaching region. Therefore, we have that $\widetilde{C}_k = \mathbb{Z}^{\#_k}$, where $\#_k$ denotes the number of attached $k$-handles, and the incidence numbers are preserved.
\par

\begin{remark}
Although for our purposes it will be sufficient to construct a cell complex, it is easier to understand the logic of the construction through the description of the handlebody. Indeed, when retracting handles to their core, many handle attachments become identified, making it a tedious task to recognize the generators of the homology groups.
\end{remark}

To explain the construction of Freedman and Hastings, we now restrict ourselves to $n=11$.\footnote{We do not enter into the details of how to fix the diffeomorphism type of the manifold. Instead, we refer the interested reader to \cite{Freedman2020CSS_Manifold} for a complete treatment}. We recall that $X$ and $Z$ denote the set of $X$ and $Z$ stabilizer generators, respectively, that we refer to as stabilizers, and $Q$ is the set of physical qubits. The construction is divided into three main steps. First, a 5-handlebody, i.e. a union of handles with indices ranging from 0 to 5, denoted $M_{ZQX}$ is built\footnote{This manifold is simply denoted $ZQX$ in \cite{Freedman2020CSS_Manifold}.}. The fundamental group $\pi_1(M_{ZQX})$ is freely generated by the 1-handles. It is trivialized by adding 2-handles, resulting in a manifold denoted $M_{ZQX^+}$ . Over concentration of 2-handles is avoided by using a certain weakly fundamental cycle basis that the author design appropriatly. In Section \ref{section cellular-lift of a CSS code}, we will ignore this step and work directly with $M_{ZQX}$.\par

Then, its upside-down copy $M_{ZQX^-}$, with dual ($11-k$)-handles and reversed orientation, is glued to $M_{ZQX^+}$ along their identical boundary, $\pm\partial M_{ZQX}$. This forms the closed oriented manifold,
\[M^{11}=M_{ZQX^+}\cup_\partial M_{ZQX^-},\]
which has handles of indices ranging from 1 to 11. Therefore, we only have to detail the construction of $M_{ZQX}$ and $M_{ZQX^+}$ to specify $M^{11}$. \par

To construct $M_{ZQX}$ we will not attach handles in the standard order of increasing index. Instead, handles of indix 0,1 and 2 will be introduced at various steps of the construction. The 3-handles ($X$-stabilizers) will be attached in a straightforward manner, but this will be more laborious for 4-handles (qubits) and 5-handles ($Z$-stabilizers). For them, we will rather create \textit{dressed-handles} $h_q$ and $h_z$, corresponding respectively to 4-dressed-handles and 5-dressed-handles. These are not standard handles, but each of them is a handle-body possibly involving 1 and 2-handles. However, each $h_q$ or $h_z$ contains, a single 4 and 5-handle respectively. Thus, each time we attach a 4 or a 5-dressed-handle, we effectively attach a single 4 or 5-handle corresponding to a basis vector of the $\mathbb Z$-lifted complex $\widetilde{ C }$. Moreover, dressed-handles will be glued by order of increasing index. While not strictly necessary, but for clarity, we will also create a dressed-handle $h_x$ for each stabilizer $x\in X$.\par

We will denote $h_x$, $h_q$ and $h_z$ the dressed-handle realization of $x\in X$, $q\in Q$ and $z\in Z $. They will be first presented in the form 
\begin{align*}
    &h_x=\{c_x \times D^8 ,\partial c_x\times D^8 \}, \\
    &h_q=\{c_q \times D^7 ,\partial c_q\times D^7 \}, \\
    &h_z=\{c_z\times D^6 ,\partial c_z \times D^6 \},
\end{align*}
and later we will detail their internal structure. The component $c_x$, $c_q$, $c_z$ called their \textit{dressed-core} and are also handlebodies of lower dimension. Both the dressed-handles $h_x$, $h_q$, $h_z$ and the dressed-cores $c_x$, $c_q$, $c_z$ can be seen as basis vectors of $\widetilde{C}_{3},\widetilde{C}_{4}$ or $\widetilde{C}_{5}$, respectively.\par
$M_{QX}$ is the manifold obtained by gluing the set of 4-dressed-handles $\{ h_q\: : \: q\in Q\}$ to the set of 3-dressed-handles $\{ h_x\: : \: x\in X\}$ according to the incidence numbers specified by the bundary map $\tilde{\partial}_1$. Similarly, $M_{ZQX}$ is the manifold obtained by attaching the set of 5-dressed-handles $\{h_z\:: \: z\in Z\}$ to $M_{QX}$.\par 

Then, the isomorphism between the handle-complex and the $\mathbb Z$-lifted code guarantees isomorphism between homolgy groups
\[H_k(M_{ZQX},\mathbb{Z}) \cong H_k(M_{ZQX^\pm},\mathbb{Z}) \cong H_k(M^{11},\mathbb{Z}) \cong H_{k-3}(\widetilde{C},\mathbb{Z}),\]
for $k=3,4$ and $5$, and for the dual dimensions. The corresponding chain complex being isomorphic, this also holds for their $\mathbb{Z}_2$ reduction.\par

Finally, Freedman and Hastings prove that the $\mathbb{Z}_2$ $k$-systoles of $M^{11}$, defined as 
\[sys_k(M^{11})=\min \limits_{0\neq[c]\in H_k(M^{11},\mathbb Z_2)}vol(c), \] and in particular the ratios $sys_k(M^{11})/vol(M)$ for $k=4$ and $k=7$, are governed by the $X$ and $Z$ relative distance of the code $C$, respectively. Using as input either the fiber bundle code \cite{Hastings2020} or the asymptotically good quantum LDPC code family of \cite{Panteleev2021}, they give the first known families of (smooth, closed, orientable) manifolds exhibiting  $\mathbb{Z}_2$-systolic freedom.

\subsection{Structure of $M_{QX}$ and 4-handles}\label{section Structure of M_QX and 4-handles}

In this section, we clarify the structure of the subcomponents of $M_{QX}$. An element $x\in X$ has a geometric realization which is a dressed-handle
\[h_x=\{c_x\times D^8,\partial c_x\times D^8\}=\{S^3\times D^8,\varnothing\}.\]
To construct it, simply attach a 3-handle $h^3$, which corresponds to the basis vector of $x$ in the chain complex $\tilde C$, to a 0-handle. The result is diffeomorphic to $S^3\times D^8$. In total, there are $|X|$ of these objects and this results in a disconnected $(0,3)$-handlebody isomorphic to $\sqcup_{x\in X} S^3 \times D^8  $ called $M_X$. This manifold also represents the trivial chain complex $\tilde{C}_0\rightarrow 0$.

\begin{figure}[t]
  \centering
  \includegraphics[scale=1]{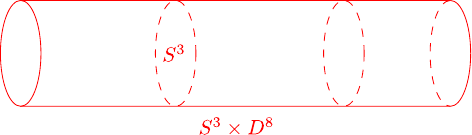}
\caption{A dimensionally reduced representation of a dressed 3-handle as a product bundle of a 3-sphere and an 8-disk. Sections of this bundle, represented as dashed spheres, are where the 4-dressed-handles can be attached (along their punctures).}
  \label{fig:3-handle}
\end{figure}
A 4-\emph{handle} associated to a qubit $ q \in Q $ is a pair
\[
h_q = \left\{ c_q \times D^7,\, \partial c_q \times D^7 \right\} = \left\{ \left(S^4 \setminus \bigsqcup_t D^4 \right) \times D^7,\, \left( \bigsqcup_t S^3 \right) \times D^7 \right\}.
\]
The manifold factor $ c_q $ is a punctured sphere,
\[
c_q = S^4 \setminus \bigsqcup_t D^4,
\]
where $t = \sum_{x \in X} \left| \left( \widetilde{\partial}_1 \right)_{x,q} \right|$, and we use the notation $\left( \widetilde{\partial}_1 \right)_{x,q}$ to denote the incidence number between $ x $ and $ q $ in the complex $ \widetilde{C} $. Each boundary component of $ c_q $ is to be attached to the corresponding $ x $'s according to $ \widetilde{\partial}_1 $, via disjoint embeddings
\[
S^3 \times D^7 \hookrightarrow S^3 \times S^7
\]
with homological degree $ \pm 1 $ in $ H_3(\partial h_x, \mathbb{Z}) $. Summing over these degrees give the corresponding entry in $ \widetilde{\partial}_1 $. We give an example procedure to construct $ h_q $.

\begin{figure}[t]
  \centering
  \includegraphics[scale=1]{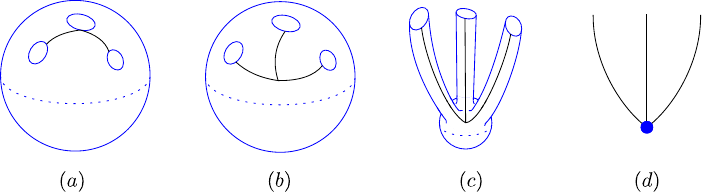}
  \caption{(a) Handle structure of a 4-dressed-handle. 1-handles are represented as segments, and 2-handles as disks. This corresponds to the cell structure when retracting handles to their core. (b) Modified cell structure discussed in Section \ref{section Cellular realization of a code}, with the addition of a 0-cell and a 1-cell. (c) Homeomorphic transformation of the dressed-handle, resulting in (d), a graph representation of the dressed-handle, with a blue vertex representing the 4-cell and half-edges corresponding to the punctures.}

  \label{fig:puncturedsphere}
\end{figure}

\begin{example}
Relative to its boundary, $h_q$ can be constructed with $(t-1)$ 1-handles, to connect the disjoint boundaries, and one 4-handle. However, it is more informative to construct its core $c_q$. The handle structure of $c_q$ is obtained via $(t-1)$ 1-handles of dimension 4, and one 4-handle. It is instructive to obtain the same object by turning the construction upside-down, namely with one 0-handle and $t$ 3-handles $h^3 = \{D^3 \times D^1, S^2 \times D^1\}$. We can easily see inductively that $c_q = \natural_{t-1} S^3 \times D^1$, where $\natural$ denotes the boundary connected sum. Indeed, after attachment of a 3-handle to the 0-handle, the result is isomorphic to $S^3 \times D^1$ and has boundary $S^3 \times S^0$. Attaching another 3-handle is equivalent to starting with two copies of $S^3 \times D^1$ and identifying a pair of 3-disks on their boundary, or equivalently connecting a 1-handle between the two. This operation is the boundary connected sum. We also verify that $\partial \natural_{t-1} S^3 \times D^1 = \#_{t-1} S^3 \times S^0 = \sqcup_t S^3$.

\end{example}

The manifold $M_{QX}$ is obtained by creating all dressed-handles corresponding to elements of $X$ and $Q$, and gluing them together according to the incidence number specified by $\tilde{\partial}_1$.

\subsection{Generators of the fourth homology groups}\label{section Generators of the fourth homology groups}

To explain the construction of the 5-dressed-handles (corresponding to $Z$-checks), it is helpful to first understand how they are attached to the boundary of $M_{QX}$. We therefore start by clarifying the nature of the generators of $H_4(\partial M_{QX})$, which correspond to the regions where 5-dressed-handles need to be attached. These generators are closed manifolds, denoted $Y$ or $Y_i$ here and in the next section, each isomorphic to $\#_t S^3 \times S^1$ for some $t \in \mathbb{N}$. To build intuition, we first present a simplified model in the example below, which we then generalize.

\begin{example}\label{example generators of fourth homology}
We consider a closed 4-manifold $Y$ obtained by combining the core of a qubit $c_q = S^4 \setminus \sqcup_{2t} D^4$ and $t$ $X$-checks, all of the form $c_x \times D^1$ (note that here we have started with $h_x$ and contracted 7 dimensions of the disk factor $D^8$), such that $c_q$ is attached twice to every check. The handle decomposition is obtained by counting the number of 1-handles needed to connect all the disconnected boundaries of $\sqcup_x c_x \times D^1$ before gluing the 4-handle representing $q$. Therefore, we have: $t$ 0-handles, $(2t - 1)$ 1-handles, $t$ 3-handles, and one 4-handle.\par

We now wish to give the manifold $Y$ a more familiar description, namely that $Y = \#_t S^3 \times S^1$. For this, we turn the decomposition upside down and obtain one 0-handle, $t$ (untwisted) 1-handles, $(2t - 1)$ 3-handles, and $t$ 4-handles. We call $W_-$ the open manifold composed of the 0-handle and 1-handles, and $W_+$ the one with the higher handles. The decomposition of $W_-$ is standard: $W_- = \natural_t S^1 \times D^3$, and since $Y$ is closed, they satisfy $\partial W_- = -\partial W_+ = \#_t S^1 \times S^2$. By a theorem of Laudenbach and Poénaru~\cite{BSMF_1972__100__337_0}, any diffeomorphism of $\#_t S^1 \times S^2$ extends to one of $\natural_t S^1 \times D^3$. Therefore, $W_+ = \natural_t S^1 \times D^3$. By gluing the lower and upper parts, we obtain $Y = \#_t S^3 \times S^1$.\par

We note that $S^3 \times S^1$ can be obtained from a twice-punctured 4-sphere by identifying its boundaries. The same manifold $Y$ can therefore be obtained from a $2t$-times punctured 4-sphere by identifying pairs of boundaries. This is equivalent to what we have done previously.

\end{example}

\begin{example}\label{example 2 generators of fourth homology}
We extend the model described in Example \ref{example generators of fourth homology} to a closed $4$-manifold $Y$ consisting of $r$ qubits $\{c_{q_i}\}$ and $t$ $X$-checks $\{c_{x_i} \times D^1\}$. For simplicity, we assume that the total number of punctures is $2t$, and that each check is connected to two punctures. Gluing two disconnected pairs of $t_1$ and $t_2$-punctured spheres, $c_{q_1}$ and $c_{q_2}$, to a core $c_x$ results in a $(t_1 + t_2 - 2)$-punctured sphere. Thus, we first consider $r$ checks to connect all the punctured spheres together, resulting in one $2(t - r)$-punctured sphere. Then, using the argument from Example \ref{example generators of fourth homology}, we attach the remaining $X$-checks to obtain $Y = \#_{(t - r)} S^3 \times S^1$.
\end{example}

The manifold described in Example \ref{example 2 generators of fourth homology} is typical of embedded $4$-manifolds in $\partial M_{QX}$. We can leverage this idea to attach the 5-handles with the correct incidence numbers. Multiple disjoint copies of a manifold $c_q$ can be found as sections of $h_q$, and similarly, multiple disjoint copies of a manifold $c_x$ can be found as sections of $h_x$. Therefore, it is possible to consider an embedded $4$-manifold constructed from any set of copies $\{c^{(i)}_{q}\}$ and $\{c^{(i)}_{x}\}$, where the elements are chosen as arbitrary sections of $\{h_q\}_{q\in Q}$ and $\{h_x\}_{x\in X}$, with the only condition being that a section $c^{(i)}_{q}$ of $h_q$ can be connected to a section $c^{(j)}_{x}$ of $h_x$ only if $h_q$ and $h_x$ are connected in $M_{QX}$, and as long as the overall pairing of $\{c^{(i)}_{q}\}$ and $\{c^{(i)}_{x}\}$ respects the construction presented in Example \ref{example 2 generators of fourth homology}.\par

These embedded orientable closed $4$-manifolds generate $H_4(\partial M_{QX})$ and serve as attachments to the 5-dressed-handles $\{h_z\::\: z \in Z\}$, transforming certain elements of $H_4(M_{QX})$ into boundaries\footnote{Note that, by a theorem of Thom, the generators of homology groups of a manifold are not always embedded sub-manifolds.}. After attachment, these $4$-manifolds also serve as representatives of elements in $H_4(M_{ZQX})$.

\subsection{Structure of $M_{ZQX}$ and 5-handles} \label{section Structure of M_ZQX and 5-handles}

A stabilizer $z \in Z$ has a dressed-handle realization written as a pair $h_z = \{c_z \times D^6, \partial c_z \times D^6\}$, where $c_z$ is a 5-dimensional manifold with a boundary that must be attached to a representative in $H_4(\partial M_{QX})$. The manifold $c_z$ can be built as a 5-dimensional sphere $S^5$ to which we remove a thickened graph. This graph, denoted $G_z$, is designed to give the correct incidence between $h_z$ and the 4-handles of the qubits. We first outline a series of steps to build $G_z$ and explain how to construct $c_z$ and $h_z$.\par

Let $\widetilde{T}$ represent the $\mathbb{Z}$-lifted Tanner graph of $\mathbb{Z}^{|Q|}\xrightarrow{\tilde{\partial}_1 }\mathbb{Z}^{|X|}$. It is similar to the original Tanner graph $T$, but an edge in $T$ is replaced by a number of multi-edges in $\tilde{T}$, where the number of these multi-edges is given by the corresponding entry in $\tilde{\partial}_1$. Each multi-edge is assigned a sign $\pm$, denoted $r_{q,x}$, where the sign is given by the homological degree of the embedding of $\partial h_q$ into $\partial h_x$, or equivalently $r_{q,x} = \mathrm{sgn}(\tilde{\partial}_1)_{q,x}$. We can also think of this graph as the signed bipartite multi-graph \cite{zaslavsky2013matricestheorysignedsimple} with adjacency matrix
\[
\renewcommand\arraystretch{1.3}
A=\mleft[
\begin{array}{c|c}
  0 & \tilde{\partial}_1  \\
  \hline
  \tilde{\partial}_1^T & 0 \\
\end{array}
\mright].
\]
We call the subgraph induced by a generator $z \in Z$ in $\widetilde{T}$ the (possibly disconnected) subgraph $\widetilde{T}_z$ of $\widetilde{T}$, composed of every $Q$-vertex in the support of $z$, every $X$-vertex having these qubits in their support, and all the signed edges between them.\par

Let $\widetilde{T}^{\text{mult}}_z$ denote the graph built from $\widetilde{T}_z$, where each $Q$-vertex appears in a number of copies given by $|(\tilde{\partial})_{z,q}|$, and with signed edges, where an edge between $q$ and $x$ is assigned the sign $r_{z,q} \cdot r_{q,x}$, with $r_{z,q} = \mathrm{sgn}(\tilde{\partial}_2)_{z,q}$. Notice that at each $X$-vertex, the number of positively signed edges equals the number of negatively signed edges. This follows from the chain complex property $\tilde{\partial}_1 \circ \tilde{\partial}_2 = 0$. We can therefore choose a pairing of oppositely signed edges at every $X$-vertex of $\widetilde{T}^{\text{mult}}_z$ and glue each pair to a new copy of the corresponding $X$-vertex. As a result, we obtain several copies of each vertex $x$, all of degree 2. The resulting graph, possibly disconnected\footnote{The graph can be disconnected if $\widetilde{T}_z$ is disconnected or due to the choice of edge pairing.}, is the graph introduced at the beginning of this section, denoted $G_z = \sqcup_i G_{z,i}$. It is realized as an embedded graph in the boundary of $M_{QX}$, and we can use it to define an embedded submanifold $Y_z = \sqcup_i Y_{z,i}$, with
\[
Y_{z,i} = \#_{b_1} S^3 \times S^1,
\]
where $b_{i,1}$ is the first Betti number of $G_{z,i}$. As illustrated in Section \ref{section Generators of the fourth homology groups}, the manifold $Y_{z,i}$ is a typical representative of an element in $H_4(\partial M_{QX})$.

The 5-dressed-handle $h_z$ must be glued to $Y_z$ with homological degree $\pm 1$. For this purpose, we can give $h_z$ a simple description,
\[
h_z = \{(S^5 \setminus \text{int} N(G_z)) \times D^6, \partial(S^5 \setminus \text{int} N(G_z)) \times D^6\},
\]
where $N(G_z)$ denotes a closed regular neighborhood of $G_z \hookrightarrow S^5$, and $\mathrm{int}$ denotes its interior.\par

\begin{figure}[t]
  \centering
  \includegraphics[scale=0.9]{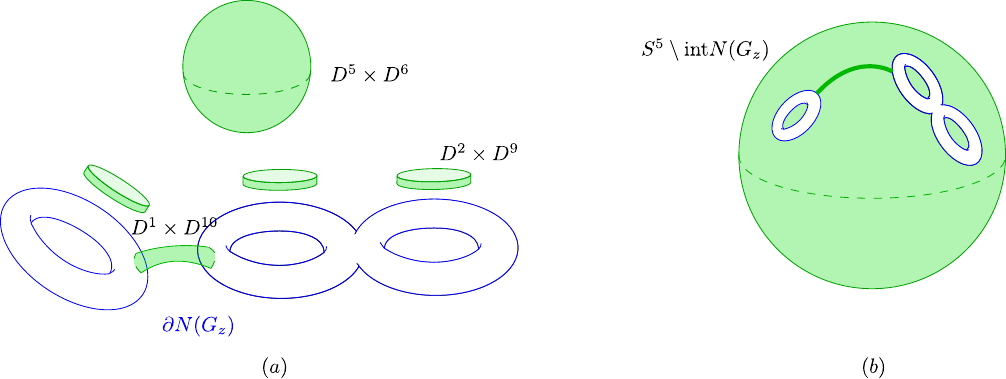}
  \caption{(a) Example of a handle structure of a 5-dressed-handle, where its components are represented in green, and the 5-handle is represented by a 3-disk. The blue region is a boundary component of the manifold $M_{QX}$. It corresponds to a the boundary of a thickened graph $G_z$, with 2 connected components. The construction of the 5-dressed-handle consists in joining them with a 1-handle, and triviliazing the fundamental group by adding 2-handles. The 5-handle is attached at the end along the resulting boundary. (b) Dimensionally reduced representation of the resulting manifold, where the 1-handle is represented by a thick green line, and the 5-handle by a 2-disk. }
  \label{fig: 5-handle}
\end{figure}
Then we can build a handlebody structure for $h_z$ relative to its boundary. Use 5-dimensional 1-handles to connect all the boundary components of $c_z = S^5 \setminus \text{int} N(G_z)$; recall that $G_z$ can be disconnected. Kill the fundamental group of the new attaching region with 2-handles by locating just enough generators. The handlebody structure of $c_z$ is completed by capping off with a single 5-handle. Crossing the result with $D^6$, each handle preserves its index, and we obtain a handle decomposition of the dressed 5-handle $h_z$. Repeating this process for every element of $Z$ completes the construction of $\{h_z : z \in Z\}$ and $M_{ZQX}$.\par
In Figure \ref{fig: 5-handle}, we illustrate an example of a handle structure for a 5-dressed-handle built from a graph $G_z$ with two connected components. Notice that the 1-handle connecting these components represents a generator of the fundamental group of $M_{ZQX}$.\par

\begin{example}\label{example 5-handle to MQX}
Consider the CSS code defined by the parity-check matrices
\[
H_X = \mleft[
\begin{array}{cc}
  1 & 1  \\
  1 & 1 \\
\end{array}
\mright], \qquad H_Z = \mleft[
\begin{array}{cc}
  1 & 1  \\
\end{array}
\mright].
\]
This code can be represented by the Tanner graph in Figure~\ref{fig:Tanner graphs}~(a). To make this example more illustrative, suppose we choose the following $\mathbb{Z}$-lift of these matrices:
\[
\tilde{H}_X = \mleft[
\begin{array}{cc}
  -3 & 1  \\
  -3 & 1 \\
\end{array}
\mright], \qquad \tilde{H}_Z = \mleft[
\begin{array}{cc}
  1 & 3  \\
\end{array}
\mright].
\]
This represents a valid $\mathbb{Z}$-lift of the chain complex, although not the simplest. The manifold $M_{QX}$ is depicted in Figure~\ref{embedded manifold_diagram}, using the representation of the 3- and 4-dressed-handles established in previous sections. According to the $\mathbb{Z}$-lift above, the 5-handle, corresponding to the single row of $\tilde{H}_Z$, should have an attaching map of degree $+1$ on the first qubit (4-handle), and degree $+3$ on the second qubit.\par

We now describe how to attach the 5-dressed-handle $h_z$ with respect to these degrees. The signed multigraph induced by the single generator $z$ is shown in Figure~\ref{fig:Tanner graphs}~(b), and the graph $\widetilde{T}_z^{\text{mult}}$ in Figure~\ref{fig:Tanner graphs}~(c). To enable a pairing of positively and negatively signed edges at the $X$-vertices, we create three copies of the second qubit—corresponding to the second entry of $\tilde{H}_Z$. An example of such a pairing is shown in Figure~\ref{fig:Tanner graphs}~(d), resulting in the graph $G_z$. This graph corresponds to a manifold isomorphic to the connected sum $\#_3(S^1 \times S^3)$, as illustrated in Figure~\ref{fig:Tanner graphs}. This manifold serves as attaching region for $h_z$ in the manifold $M_{QX}$ and is represented via the embedding of $G_z$ in Figure~\ref{embedded manifold_diagram}. Notice that the manifold goes into the 4-handle of the first qubit once and the 4-handle of the second qubit three times. This is why the attaching map of $h_z$ has the correct homological degree.

\end{example}

\begin{figure}[t]
  \centering
  \includegraphics[scale=0.9]{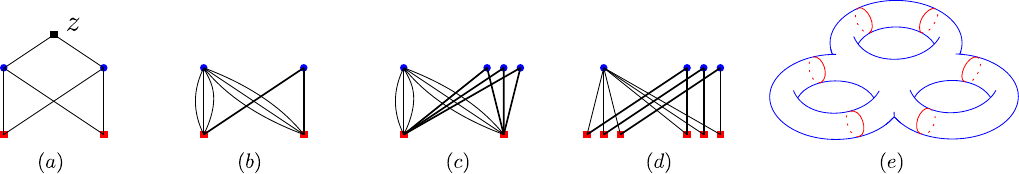}
  \caption{(a) Tanner graph of the code considered in Example~\ref{example 5-handle to MQX}, with qubits shown in blue and $X$-checks in red. (b) Signed multigraph $\widetilde{T}$ of the code $C_X$, which also corresponds to the induced subgraph $\widetilde{T}_z$ in this example. Positively signed edges are depicted as thick lines, and negatively signed edges as thin lines. (c) The corresponding $\widetilde{T}_z^{\text{mult}}$, where each qubit appears in a number of copies determined by the corresponding entry of $H_Z$. (d) Pairing of positively and negatively signed edges at the $X$-vertices, resulting in the graph $G_z$ with Betti number $b_1 = 3$. This graph represents the manifold $Y_z$. (e) Manifold $Y_z = \#_3 (S^3 \times S^1)$ corresponding to the attaching region of the 5-dressed-handle $h_z$. The four blue components correspond to the qubit copies in $G_z$ (punctured spheres), while the six red spheres represent the $X$-vertices (3-handles).}
  \label{fig:Tanner graphs}
\end{figure}

\begin{figure}[t]
  \centering  \includegraphics[scale=1]{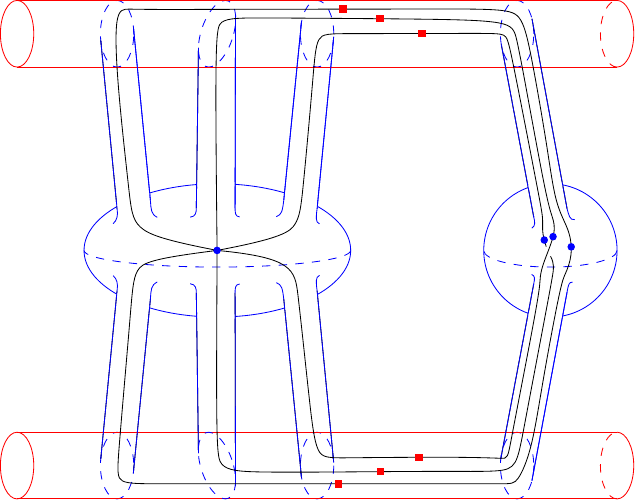}
\caption{Manifold $M_{QX}$ in Example~\ref{example 5-handle to MQX}, where the two 3-dressed-handles ($X$-checks) are shown in red, and the two 4-dressed-handles (qubits) in blue. This representation should be interpreted as a $4$-manifold crossed with a $7$-disk. Copies of manifolds of the form $Y_z$ are embedded within it, here represented by the graph $G_z$ (see Figure~\ref{fig:Tanner graphs}). This embedded manifold serves as a valid attaching region for the 5-dressed-handle $h_z$.}
  \label{embedded manifold_diagram}
\end{figure}

\section{Cellular-lift of a CSS code}\label{section cellular-lift of a CSS code}

\subsection{Cellular realization of a code}\label{section Cellular realization of a code}

We will now relax certain aspects of the construction of $M^{11}$ that are of minor importance for the purpose of performing code lifting: we only consider the handle structure of $M^{11}$ up to handles of index $5$; we also omit the step of killing its fundamental group by adding $2$-handles. This corresponds to the manifold\footnote{Note that this manifold is possibly simply connected.} with boundary $M_{ZQX}$. We will retract all handles to their core and consider a slightly modified associated cell complex, which is equivalent to relaxing the smoothness hypothesis. Moreover, the family of quantum CSS codes will not be required to be LDPC. The cell complex associated with each code has finitely many cells. \par

We also slightly modify the construction of the cell complex. We can refer back to the handle structure of $M_{ZQX}$ at each step to explain the changes. On every dressed core $c_q$, associated with qubit $q$, we introduce a new $0$-cell, as well as a new $1$-cell (see Figure~\ref{fig:puncturedsphere}). The set of vertices can thus be divided into $X$-vertices (the retracted $0$-handles) and $Q$-vertices (the new $0$-cells). Every $h_q$ has a "star" of $1$-cells attached to its $Q$-vertex, and each $2$-cell introduced for $Z$-attachment can be glued to qubits along simple cycles represented by a path of the form 
\[
f = [x_1, q_1].[q_1, x_2].[x_2, q_2].\dots.[q_k, x_1],
\]
where $[a,b]$ denotes an oriented edge from $a$ to $b$. This step can be carried out in such a way that the closure of each $2$-cell in the complex is simply-connected. We repeat the process for every element of $\{\widetilde{T}_z : z \in Z\}$. \par

Finally, we kill certain specific generators of the fundamental group of the resulting cell complex. Indeed, in Section~\ref{section Structure of M_ZQX and 5-handles}, a $5$-dressed-handle $h_z$ is built according to a graph $G_z$, which may have several connected components. This is illustrated in Figure~\ref{fig: 5-handle}. The $1$-handles connecting these components are generators of the fundamental group of the cell complex. In the cell complex, we refer to these 1-cells as $Z$-type, while the previous ones are $Q$-type 1-cells. Here, we choose to kill these generators by attaching $2$-cells. We assume that these added $2$-cells are part of the dressed $5$-handles. Our motivation is to ensure that each $5$-dressed-handle has a simply connected closure in the cell complex. \par

We denote by $\mathcal{L}(C)$ the set of all $5$-dimensional cell complexes constructed in this manner, where we always assume that the same choice of $\mathbb{Z}$-lift has been fixed for every element of $\mathcal{L}(C)$.

\begin{definition}
Let $C$ be a CSS code. Any cell complex $L\in \mathcal{L}(C)$ built as described above is referred to as a \textit{cellular realization} of $C$.
\end{definition}
In the following, given a cellular realization $L\in \mathcal L(C)$, we will often consider its $k$-skeleton denoted $L^k$.\par

The introduction of the new 1- and 0-cells suggests the diagrammatic notation for a punctured 4-sphere used in Section \ref{section Structure of M_QX and 4-handles}. It becomes clear that the graph (d) of Figure \ref{fig:puncturedsphere} is sufficient to represent a punctured sphere. We can therefore faithfully represent the embedded 4-manifolds to which 5-handles are attached directly by the graph $G_z$, as in Figure \ref{fig:Tanner graphs}. \par

Lastly, notice that the attachment of the 5-cell onto $M_{QX}$ defines a projection map $\rho: G_z \to \widetilde{T}_z$. An example of this mapping can be inferred from Figures~\ref{fig:Tanner graphs} and~\ref{embedded manifold_diagram}. It is crucial to observe that when we retract handles to their core (to obtain a cell decomposition), the gluing map of the 5-cell associated with each $z \in Z$ amounts to projecting $G_z$ onto $\widetilde{T}_z$ via $\rho$. Hence, if we are not interested in the handles but only in the associated cell complex, we must still use the graph $G_z$ to attach the 2-cells and the 5-cell. However, upon projecting $G_z$ back to $\widetilde{T}_z$, many of the 2-cell boundaries are mapped by $\rho$ onto overlapping cycles in $\widetilde{T}_z$, and some of them become identified.\par

\subsection{Definition of the cellular-lift}\label{section definition of the cellular-lift}

In this section, we leverage the cellular realization of a CSS code to define a natural notion of code lifting. We also describe the complete procedure to construct such a lift explicitly. The procedure relies on \cite{HatcherTopo} and \cite{Lyndon2001}, and is similar to the one used in \cite{GuemardLiftIEEE}. For any cell complex $L$, we denote its cellular chain complex with coefficients in a ring $R$ as $\mathcal{C}(L, R)$.

\begin{definition}\label{definition Geometrical lift of quantum CSS code}
Let $L\in \mathcal{L}(C) $ be a cellular realization of a quantum CSS code $ C $, as described in Section \ref{section Cellular realization of a code}, and let $p:L'\rightarrow L $ be a finite covering. The cellular-lift $C'$ of $ C $ associated to $p$ corresponds to the chain-complex
\[
 C':= \mathcal C_5( L' , \mathbb{Z}_2)\xrightarrow[]{\partial_5'}\mathcal C_4( L' , \mathbb{Z}_2)\xrightarrow{\partial_4'} \mathcal C_3( L' ,\mathbb{Z}_2).
\]
\end{definition}

Using the notation above, we can also say that $C'$ is a CSS code such that $L' \in \mathcal{L}(C')$, for a given choice of $\mathbb{Z}$-lift.

All cellular realizations of a code $C$ are connected finite CW complexes. They are therefore \textit{well-behaved} topological spaces, i.e., \textit{locally path-connected} and \textit{semi-locally simply connected}, which are necessary properties for the classification theorem known as the Galois correspondence, Theorem \ref{Theorem Galois correspond}, to apply.

\begin{theorem}[Galois correspondence]\label{Theorem Galois correspond}
Let $L$ be a well-behaved topological space. There is a bijection between the set of basepoint-preserving isomorphism classes of path-connected covering spaces $p:(L',v')\to (L,v)$ and the set of subgroups of its fundamental group $\pi_1 (L, v )$, obtained by associating the subgroup $p_*\pi_1 (L',v' )$ to the covering space $(L',v')$. \par 
Given a subgroup $H\leq \pi_1(L)$ the degree $d$ of the covering is given by the index\footnote{Here, for groups $H\leq G$, $[G:H]$ is the standard notation for the index of $H$ in $G$.} $d=[\pi_1(L):H]$.
\end{theorem}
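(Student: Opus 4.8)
The plan is to follow the classical development of covering-space theory (as in \cite{HatcherTopo}), splitting the argument into three stages showing that the assignment $\Phi: (L',v') \mapsto p_*\pi_1(L',v')$ is well defined, injective, and surjective, and then computing the degree separately. For well-definedness the only nontrivial point is that $p_*: \pi_1(L',v') \to \pi_1(L,v)$ is injective: given a loop $\gamma'$ at $v'$ with $p\circ\gamma'$ null-homotopic in $L$, I would lift the null-homotopy $D^2 \to L$ through $p$ using the homotopy lifting property enjoyed by every covering map, obtaining a null-homotopy of $\gamma'$, so $[\gamma']=0$. Basepoint-preserving isomorphic covers visibly have the same image subgroup, so $\Phi$ descends to isomorphism classes.

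For injectivity of $\Phi$, suppose $p_i:(L_i,v_i)\to(L,v)$, $i=1,2$, satisfy $p_{1*}\pi_1(L_1,v_1)=p_{2*}\pi_1(L_2,v_2)$. I would invoke the lifting criterion: a map $f:(Z,z)\to(L,v)$ from a path-connected, locally path-connected space lifts through a covering $p':(L',v')\to(L,v)$ if and only if $f_*\pi_1(Z,z)\subseteq p'_*\pi_1(L',v')$. Covering spaces of $L$ are themselves path-connected and locally path-connected, so applying the criterion with $f=p_1$ and $f=p_2$ yields basepoint-preserving maps $L_1\to L_2$ and $L_2\to L_1$ over $L$; their composites are lifts of $p_1$ (resp. $p_2$) fixing $v_1$ (resp. $v_2$), hence equal the identity by uniqueness of lifts, so the two maps are mutually inverse isomorphisms of pointed covers.

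The heart of the argument, and the step I expect to be the main obstacle, is surjectivity: realizing an arbitrary subgroup $H\leq\pi_1(L,v)$. Here the hypotheses packaged into ``well-behaved'' are exactly what is needed. I would first construct the universal cover $\widetilde{L}$ as the set of path-homotopy classes $[\alpha]$ of paths $\alpha$ in $L$ with $\alpha(0)=v$, topologized by the basis of sets $U_{[\alpha]}$ consisting of classes $[\alpha\cdot\beta]$ with $\beta$ a path inside a path-connected open $U\ni\alpha(1)$ for which $\pi_1(U)\to\pi_1(L)$ is trivial; semi-local simple connectedness guarantees such $U$ exist, and local path-connectedness makes the projection $[\alpha]\mapsto\alpha(1)$ a (locally trivial) covering map of a simply connected space. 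The group $\pi_1(L,v)$ then acts freely on $\widetilde{L}$ by $[\alpha]\mapsto[\eta\cdot\alpha]$, so $L_H:=\widetilde{L}/H$ with the quotient topology is a covering space of $L$, and tracing the basepoint shows $(p_H)_*\pi_1(L_H)=H$. Establishing local triviality of $\widetilde{L}\to L$ is the delicate technical point.

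Finally, the degree equals $[\pi_1(L):H]$ by an orbit-stabilizer argument on the fiber. Fix $p:(L',v')\to(L,v)$ with $H=p_*\pi_1(L',v')$. The group $\pi_1(L,v)$ acts on the right on $p^{-1}(v)$ by $w\cdot[\gamma]:=\widetilde{\gamma}(1)$, where $\widetilde{\gamma}$ is the unique lift of $\gamma$ with $\widetilde{\gamma}(0)=w$; this is well defined by the homotopy lifting property and uniqueness of lifts. Path-connectedness of $L'$ makes the action transitive, and the stabilizer of $v'$ is exactly the set of loop classes lifting to loops at $v'$, namely $p_*\pi_1(L',v')=H$. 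Hence $p^{-1}(v)$ is in bijection with $H\backslash\pi_1(L,v)$, and the degree of $p$ is $|p^{-1}(v)|=[\pi_1(L):H]$.
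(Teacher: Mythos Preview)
The paper does not prove this theorem at all; it is stated as a classical result from covering-space theory (with a reference to \cite{HatcherTopo}) and is used as a black box throughout. Your proposal correctly reproduces the standard textbook proof (well-definedness via injectivity of $p_*$, injectivity of $\Phi$ via the lifting criterion and uniqueness of lifts, surjectivity via the path-space construction of the universal cover and quotienting by $H$, and the degree formula via the monodromy action on a fiber), so there is nothing to compare.
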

Let $H$ be a subgroup of $\pi_1(L)$, and
\[
p_H : L_H \rightarrow L
\]
the corresponding covering. Our objective here is to construct the space $L_H$. For this, we will first generate a covering of $L^2$, and then show how to complete a covering of $L^2$ into one of $L$.

Recall that $L^1$ is the union of the $\mathbb{Z}$-lifted Tanner graph of the chain complex $\widetilde{C}_0 \xrightarrow{\widetilde{H}_X} \widetilde{C}_1$ with the $Z$-type 1-cells. We denote the set of vertices and edges by $(V, E)$. Let 
\[
\phi : \pi_1(L^1) \to \pi_1(L)
\]
be the homomorphism defined by taking the quotient of the free group $\pi_1(L^1)$ by the normal closure of the subgroup generated by the loops corresponding to the boundaries of the $2$-cells in $L$. We define $H^1 := \phi^{-1}(H)$, i.e., the preimage of $H$ in $\pi_1(L^1)$. Whenever $H$ is normal, there is an isomorphism $\pi_1(L)/H \cong \pi_1(L^1)/H^1$, otherwise, this is a bijection between the cosets\footnote{Here, all cosets are taken on the right.}.

To construct a covering map of the $2$-skeleton, we must start with one of the $1$-skeleton,
\[
p_{H^1} : L^{1}_{H^1} \rightarrow L^{1}.
\]
This procedure follows the approach outlined in \cite{GuemardLiftIEEE}, and is based on methods described in \cite{HatcherTopo} and \cite{Lyndon2001}. Let $\pi_1 := \pi_1(L^{1}, q_0)$, and let $S$ be a spanning tree of $L^{1}$. The graph $L^{1}_{H^1}$ has a set of vertices in bijection with $V' := V \times \pi_1/H^1$, and a set of edges in bijection with $E' := E \times \pi_1/H^1$. We now explain how to connect the edges. Let $e = [x, q]$ be an oriented edge from $x$ to $q$ in $L^1$, and let $\gamma^S(e) \in \pi_1$ denote the group element obtained by adding $e$ to the spanning tree $S$. More formally, the equivalence class of loops is denoted
\[
\gamma^S(e) = \left[ \big((q_0, x)\big) \cdot e \cdot \big((q, q_0)\big) \right],
\]
where $\big((a, b)\big)$ is an arbitrary path in $S$ made of oriented edges from vertex $a$ to $b$. Then, the edge $(e, gH^1) \in E'$ refers to the edge $\left[(x, gH^1),\; (q, g \gamma^S(e) H^1)\right]$. This completes the description of the lifted graph $L^1_{H^1}$.\par

We now give an explicit description of the boundary maps of the lifted code. We use the shorthand notation $\gamma^S[x,q] := \gamma^S([x,q])$. The lifted incidence matrices $(\tilde{\delta}^H_1, \tilde{\partial}^H_1)$ can be expressed as follows on basis vectors:
\begin{align}
    \tilde{\delta}^H_1(x, gH^1) &= \sum_{q \in \mathrm{supp}(\tilde{\delta}_1 x)} (\tilde{\delta}_1)_{q,x} \cdot (q, g\gamma^S[x,q] H^1), \nonumber \\
    \tilde{\partial}^H_1(q, gH^1) &= \sum_{x \in \mathrm{supp}(\tilde{\partial}_1 q)} (\tilde{\partial}_1)_{x,q} \cdot (x, g\gamma^S[q,x] H^1), \label{diff1}
\end{align}
where $(\tilde{\delta}_1)_{q,x} \in \mathbb{Z}$ (respectively $(\tilde{\partial}_1)_{x,q}$) denotes the incidence number between vertex $x$ and edge $q$. These maps are extended by linearity over chains.

For the associated CSS code over $\mathbb{F}_2$, we simply take the entries modulo $2$. For instance, the lifted boundary map becomes $\partial^H_1(q, gH^1) = \sum_{x \in \mathrm{supp}(\partial_1 q)} (x, g\gamma^S[q,x] H^1)$.\par

To obtain the space $L_H^2$, we lift the attaching maps of the $2$-cells of $L$ into the $1$-skeleton $L^1_{H^1}$. Proceeding analogously for higher-dimensional cells, we obtain a covering of the $4$-skeleton.\par

We now describe how to compute the lifted boundary maps $(\tilde{\delta}^H_2, \tilde{\partial}^H_2)$ using only the $2$-skeleton of $L$ and $L_H$. Note that lifting the attaching map of each $5$-cell, whose support corresponds to a row of the incidence matrix $\tilde{\partial}^H_2$, completes the construction of the full covering map $p_H: L_H \to L$. Let $z \in Z$, and let $A_z$ denote the subcomplex obtained by adjoining the graph $\widetilde{T}_z$ with the $2$-skeleton of the dressed core $c_z$. Since the $2$-cells in $A_z$ were introduced specifically to kill its fundamental group, $A_z$ is simply connected. As a result, it lifts to $|\pi_1(L)/H|$ disjoint copies in the covering space $L_H$.\footnote{This follows from the fact that the restriction of a covering map to a subcomplex $A \subset L$ is again a covering map. If $A$ is simply connected, the Galois correspondence guarantees that the covering decomposes into disjoint copies of $A$.} Let $S_z$ be a spanning tree of the $1$-skeleton $A_z^1$. Lifting this tree to $p_H^{-1}(S_z)$ allows us to determine the support of each lifted stabilizer, and hence to reconstruct the lifted boundary maps. Choose an arbitrary basepoint $q_z$ in $A_z^1$. Denote by $S_{(z, gH)}$ the unique lift of $S_z$ to $L_H$, based at $(q_z, gH)$, and let $(z, gH)$ be the corresponding lift of the stabilizer $z$. To reach any other vertex (qubit) $q \in A_z^1$ from $q_z$, we follow a path $((q_z, q))$ within $S_z$. The lift of this path, starting from $(q_z, gH)$, terminates at the qubit $(q, g\phi(\gamma^{S_z}(q_z, q))H)$, where the group element $\gamma^{S_z}(q_z, q)$ is defined as
\begin{equation}\label{path Tz}
    \gamma^{S_z}(q_z, q) := \gamma^S[q_z, x_1] \cdot \gamma^S[x_1, q_1] \cdots \gamma^S[x_p, q],
\end{equation}
with the path $((q_z, q))$ decomposed into edges as $[q_z, x_1], [x_1, q_1], \ldots, [x_p, q]$.\par

With this notation, we can express the action of the co-boundary and boundary maps on basis elements as
\begin{align}
    \tilde{\delta}^H_2(q, gH) &= \sum_{z \in \mathrm{supp}(\tilde{\delta}_2 q)} (\tilde{\delta}_2)_{z, q} \Bigl( z,\, g\phi(\gamma^{S_z}(q, q_z))H \Bigr), \nonumber \\
    \tilde{\partial}^H_2(z, gH) &= \sum_{q \in \mathrm{supp}(\tilde{\partial}_2 z)} (\tilde{\partial}_2)_{q, z} \Bigl( q,\, g\phi(\gamma^{S_z}(q_z, q))H \Bigr). \label{diff2}
\end{align}
The parity-check matrices of the associated quantum CSS code are obtained by reducing the coefficients modulo $2$.

\begin{theorem}
The boundary maps defined in (\ref{diff1}) and (\ref{diff2}) satisfy $\tilde{\partial}^H_1\circ\tilde{\partial}^H_2=0$.
\end{theorem}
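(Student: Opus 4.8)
The plan is to reduce the identity, term by term on basis vectors, to the relation $\tilde{\partial}_1\circ\tilde{\partial}_2=0$ of the $\mathbb{Z}$-lifted complex $\widetilde{C}$, the bridge between the two being the simple connectivity of the auxiliary subcomplexes $A_z$. First I would fix $z\in Z$ and a coset $gH\in\pi_1(L)/H$, and note that $\phi:\pi_1(L^1)\to\pi_1(L)$ induces a bijection $\pi_1(L^1)/H^1\xrightarrow{\ \sim\ }\pi_1(L)/H$ sending $g'H^1$ to $\phi(g')H$ (a group isomorphism when $H$ is normal), so that the two lines (\ref{diff1}) and (\ref{diff2}) can legitimately be composed once these coset sets are identified. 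Expanding $\tilde{\partial}^H_1\tilde{\partial}^H_2(z,gH)$ via (\ref{diff2}) and then (\ref{diff1}) yields a double sum over $q\in\operatorname{supp}(\tilde{\partial}_2 z)$ and $x\in\operatorname{supp}(\tilde{\partial}_1 q)$ of lifted $X$-cells $\bigl(x,\ g\,\phi(\gamma^{S_z}(q_z,q)\,\gamma^{S}[q,x])\,H\bigr)$, each weighted by $(\tilde{\partial}_2)_{q,z}(\tilde{\partial}_1)_{x,q}$; all $x$ occurring here are vertices of $\widetilde{T}_z$.

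The main step is a ``sheet-independence'' claim: for a fixed $X$-vertex $x$, all incident qubits $q$, and (if $\widetilde{T}$ has multi-edges) all multi-edges between a given $q$ and $x$, contribute to one and the same lifted cell, which I will call $(x,g'_xH)$. To establish this I would show that, for two such incidences, the discrepancy element $\bigl(\gamma^{S_z}(q_z,q)\gamma^{S}[q,x]\bigr)^{-1}\gamma^{S_z}(q_z,q')\gamma^{S}[q',x]$ is, in $\pi_1(L^1,q_0)$, a conjugate of the homotopy class of a loop $\ell_0$ based at $x$ that runs along the edge $[q,x]$ backwards to $q$, along the unique $S_z$-path from $q$ to $q'$, and along the edge $[q',x]$; this uses the definition (\ref{path Tz}) of $\gamma^{S_z}$ and the fact that $S_z$ is a tree, which collapses the concatenation of the $S_z$-paths $q_z\to q$ and $q_z\to q'$ to the $S_z$-path $q\to q'$. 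Since $q,q'\in\operatorname{supp}(z)$, and the edges and tree path involved all lie in $\widetilde{T}_z\subseteq A_z$, the loop $\ell_0$ sits inside the simply connected complex $A_z$; as every cell of $A_z$ is a cell of $L$ (its $2$-cells being part of the dressed $5$-handle $h_z$), $\ell_0$ is null-homotopic already in $L^2$. Hence its class lies in $\ker\phi\subseteq\phi^{-1}(H)=H^1$, so the two cosets coincide.

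With sheet-independence in hand the bookkeeping collapses: for each $X$-vertex $x$ appearing in the expansion the entire contribution sits on the single cell $(x,g'_xH)$, with total coefficient $\sum_{q}(\tilde{\partial}_2)_{q,z}(\tilde{\partial}_1)_{x,q}=(\tilde{\partial}_1\circ\tilde{\partial}_2)_{x,z}$, which vanishes because $\widetilde{C}$ is a chain complex, while all other lifts of $X$-cells receive coefficient $0$. Therefore $\tilde{\partial}^H_1\tilde{\partial}^H_2(z,gH)=0$ for every $z$ and every $gH$, and reducing the coefficients modulo $2$ gives $\partial^H_1\circ\partial^H_2=0$ for the parity-check matrices of $C'$.

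I expect the sheet-independence step to be the real obstacle: it is the topological core of the argument, and it is exactly where both inputs of the construction are used — the $\mathbb{Z}$-lift (which is what makes $\widetilde{T}_z$, hence $A_z$, available, and which supplies the cancellation through $\tilde{\partial}_1\circ\tilde{\partial}_2=0$) and the $2$-cells of $L$ (which make $A_z$ simply connected). The remaining work — fixing orientation conventions for the oriented edges $[q,x]$, tracking the group elements $\gamma^{S},\gamma^{S_z}$ along tree paths, and checking that $\ell_0$ is genuinely supported in $A_z$ — is routine. As a consistency check one may also observe that $L_H$, being a covering of the CW complex $L$, is itself a CW complex and that (\ref{diff1})--(\ref{diff2}) compute its cellular boundary maps in the relevant degrees, from which the statement is just an instance of $\partial^2=0$ for cellular chain complexes; I would nonetheless give the direct argument, since it makes transparent why the explicit lifted parity-check matrices remain orthogonal.
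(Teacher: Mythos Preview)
Your proposal is correct and follows essentially the same route as the paper: expand $\tilde{\partial}^H_1\tilde{\partial}^H_2(z,gH)$ as a double sum, observe that for fixed $x$ the coset label $g\,\phi\bigl(\gamma^{S_z}(q_z,q)\gamma^{S}[q,x]\bigr)H$ is independent of $q$ by simple connectivity of $A_z$, and then factor out to invoke $\tilde{\partial}_1\circ\tilde{\partial}_2=0$. Your treatment is in fact more explicit than the paper's on the sheet-independence step (the paper merely asserts independence ``since $A_z$ is simply connected''), and you correctly flag the identification $\pi_1(L^1)/H^1\cong\pi_1(L)/H$ needed to compose (\ref{diff1}) and (\ref{diff2}), which the paper leaves implicit.
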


\begin{proof}
We prove the relation on basis elements and extend it to the general case by linearity. Let $(z, gH)$ be an arbitrary $Z$-check in the cover. Then,
\[
\tilde{\partial}^H_1 \circ \tilde{\partial}^H_2(z, gH) = \sum_{q \in \mathrm{supp}(\tilde{\partial}_2 z)} \sum_{x \in \mathrm{supp}(\tilde{\partial}_1 q)} (\tilde{\partial}_2)_{q, z} (\tilde{\partial}_1)_{x, q} \left( x, g\phi\left( \gamma^{S_z}(q_z, q)\, \gamma^S[q, x] \right) H \right).
\]

Since $S_z$ is fixed for $z$ and the closure of $A_z$ in $L^2$ is simply connected, the element $\gamma^{S_z}(q_z, q)\, \gamma^S[q, x] \in \pi_1(L^1)$ is independent of the choice of $q$. We therefore denote it $\gamma^{S_z}(q_z, x)$. Next, observe that each $x \in \mathrm{supp}(\partial_1 q)$ is also a vertex of $S_z$ by construction. Hence, we may extend the sum to all $x \in S_z$:
\[
\tilde{\partial}^H_1 \circ \tilde{\partial}^H_2(z, gH) = \sum_{x \in S_z} \left( x, g\phi\left( \gamma^{S_z}(q_z, x) \right) H \right) \sum_{q \in \mathrm{supp}(\tilde{\partial}_2 z)} (\tilde{\partial}_1)_{x, q} (\tilde{\partial}_2)_{q, z}.
\]
We conclude by applying the relation $\tilde{\partial}_1 \circ \tilde{\partial}_2 = 0$.
\end{proof}
The new lifted code is obtained by truncating the cellular chain complex of $L_H$ and keeping only the portion
\[
C^{H} := \mathcal{C}_5(L_H, \mathbb{Z}_2) \xrightarrow[]{\partial^H_5} \mathcal{C}_4(L_H, \mathbb{Z}_2) \xrightarrow[]{\partial^H_4} \mathcal{C}_3(L_H, \mathbb{Z}_2).
\]
This code, called a \emph{cellular lift} of $C$, has a number of physical qubits given by $|Q| \cdot |\pi_1(L)/H|$, and similarly for the number of $X$- and $Z$-checks.\par

The chain complex $C^H$ inherits a left action from the automorphism group of the covering, known as the group of deck transformations, and denoted $\operatorname{Deck}(p_H)$, and the dual complex enjoys a corresponding right action by $\operatorname{Deck}(p_H)$. When $H$ is a normal subgroup of $\pi_1(L)$, the group of deck transformations is given by $\operatorname{Deck}(p_H) = \pi_1(L)/H$, and its action is both free and transitive.\par

Additionally, the construction of the cellular lift can be formulated in terms of fiber bundles and balanced products, as described in Sections 3.5 and 3.6 of \cite{GuemardLiftIEEE}.

\subsection{Relation between the cellular-lift and the Tanner-lift}\label{section Relation between the cellular-lift and the Tanner-lift}
In \cite{GuemardLiftIEEE}, a two-dimensional simplicial representation of a code $C$, called the \emph{Tanner cone-complex} and denoted $\mathcal{K}(C)$, was introduced for the purpose of performing code lifting. As a result, we now have two distinct representations of a code: the Tanner cone-complex $\mathcal{K}(C)$ and a cellular realization $L \in \mathcal{L}(C)$, both of which support code lifting constructions. In this section, we discuss the relationship between the structure of the Tanner cone-complex and the cellular realization of a code, and we identify conditions under which the two lifting techniques are equivalent.

The Freedman–Hastings construction proceeds by building a manifold cell-by-cell. As previously discussed, adding a 5-cell, corresponding to a $Z$-check, involves trivializing the fundamental group of the graph $G_z$ by attaching 2-cells along a fundamental cycle basis. This construction is illustrated in Figure~\ref{fig: 5-handle}. \par

In the standard Tanner graph of the code, denoted $\mathcal{T}(C)$ in \cite{GuemardLiftIEEE}, the corresponding subgraph induced by a $Z$-check $z$ is denoted $\mathcal{T}_z \hookrightarrow \mathcal{T}(C_X)$. In the Tanner cone-complex construction, this fundamental group is trivialized by attaching the cone over the subgraph, $\mathrm{C} \mathcal{T}_z$, resulting in the new simplicial complex $\mathcal{T}(C_X) \cup \mathrm{C} \mathcal{T}_z$. This process is detailed in Proposition 3.6 of \cite{GuemardLiftIEEE}. We thus observe that trivializing the fundamental group of $G_z$ in the cellular realization is analogous to trivializing the fundamental group of $\mathcal{T}_z$ in the Tanner cone-complex.\par

However, it is important to note that $G_z$ may have more connected components than $\mathcal{T}_z$. In such cases, the 1-handles connecting these components act as generators of the fundamental group of the cellular realization. Assuming that $\widetilde{T}_z$ is connected and that the $\mathbb{Z}$-lift is support-preserving, attaching 2-cells to eliminate these generators ensures that the fundamental group of the cellular realization becomes isomorphic to that of the Tanner cone-complex. We summarize the correspondence between the structure of the Tanner cone-complex and the cellular realization of a code in Table \ref{tab:Tanner-cone complex vs Cellular realization}.

\begin{table}[]
    \centering
    \begin{tabular}{c|c}
         Tanner cone-complex $\mathcal K(C)$&  Cellular realization $L\in \mathcal L(C)$\\ \hline
         Tanner graph $\mathcal T(C_X)$& 
    $\mathbb Z$-lifted Tanner graph of $C_X$, $\widetilde T$\\
 induced subgraph $\mathcal T_z$&induced subgraph $\widetilde T_z$\\
 cone $\mathrm C \mathcal T_z$&2-skeleton of $h_z$\\
 2-cells of $\mathrm C \mathcal T_z$&2-cells trivializing $\pi_1(G_z)$\\\end{tabular}
    \caption{Correspondence between the structure of the Tanner cone-complex and of the cellular realization of a code $C=\mathrm{CSS} (C_X,C_Z)$}
    \label{tab:Tanner-cone complex vs Cellular realization}
\end{table}

Nevertheless, there are key distinctions between the two approaches to representing a code. Firstly, a fundamental difference arises between taking the cone of a subgraph and trivializing its fundamental group via a chosen fundamental cycle basis, as required for the 5-handle attachment. While the coning operation produces a contractible space, thus introducing no higher homotopy or homology, the process of trivializing the fundamental group by adding 2-cells may introduce higher-dimensional homotopy groups and nontrivial second homology. Secondly, an essential aspect in comparing the two code lifting techniques is the existence of a \textit{support-preserving} $\mathbb{Z}$-lift. It is important to note that not all codes admit such a lift. In Appendix~\ref{section A CSS code admitting no support preserving Z-lift}, we review an example of a CSS code which does not admit a support-preserving $\mathbb{Z}$-lift,, originally presented in~\cite{MathoverflowZ-Lift167615} and merely reformulated in the language of coding theory.\par

When a support-preserving $\mathbb{Z}$-lift is available, the two lifting procedures become equivalent, as the fundamental group of the cellular realization can be made isomorphic to that of the Tanner cone-complex. This equivalence holds for many well-known constructions, including codes derived from regular cellulations of manifolds, hypergraph-product codes~\cite{Tillich2009}, lifted-product codes~\cite{Panteleev2020,Panteleev2021}, and fiber bundle codes~\cite{Hastings2020}. In such cases, we can leverage the cellular-lift in the following way.

\begin{lemma}
Let $C$ be a CSS code admitting a support preserving $\mathbb Z$-lift, and let $C'$ be a Tanner-lift of $C$. Then, there exists a support preserving $\mathbb{Z}$-lift of $C'$.
\end{lemma}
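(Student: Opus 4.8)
The plan is to realize $C'$ itself as a cellular-lift of $C$ and to read a support-preserving $\mathbb{Z}$-lift of $C'$ directly off the boundary-map formulas (\ref{diff1}) and (\ref{diff2}). First I would fix a support-preserving $\mathbb{Z}$-lift $\widetilde{C}$ of $C$, which exists by hypothesis, and use it to build a cellular realization $L\in\mathcal{L}(C)$ as in Section~\ref{section Cellular realization of a code}. Because the chosen $\mathbb{Z}$-lift is support-preserving, the discussion preceding this lemma (summarised in Table~\ref{tab:Tanner-cone complex vs Cellular realization}) applies: the extra $Z$-type $1$-cell generators coming from the several connected components of the graphs $G_z$ are killed by $2$-cells, and $\pi_1(L)\cong\pi_1(\mathcal{K}(C))$ through an isomorphism compatible with the labelling of $X$-checks, $Z$-checks and qubits, so we are in the regime where the Tanner-lift and the cellular-lift coincide. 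By Definition~\ref{definition lift of quantum CSS code}, $C'$ is given by a finite cover of $\mathcal{K}(C)$, hence by Theorem~\ref{Theorem Galois correspond} by a finite-index subgroup $H\leq\pi_1(\mathcal{K}(C))$; transporting $H$ across the isomorphism above produces a finite cover $p_H\colon L_H\to L$ with $L_H\in\mathcal{L}(C')$, whose associated cellular-lift (Definition~\ref{definition Geometrical lift of quantum CSS code}) is $C'$ itself, now equipped with the integral chain complex $\widetilde{C^H}$ whose boundary maps $\widetilde{\partial}^H_1,\widetilde{\partial}^H_2$ are given by (\ref{diff1}) and (\ref{diff2}). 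By the theorem immediately preceding this lemma, $\widetilde{\partial}^H_1\circ\widetilde{\partial}^H_2=0$, so $\widetilde{C^H}$ is a genuine $\mathbb{Z}$-lift of $C'$, and it only remains to verify that it is support-preserving.

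This last verification is the easy part. Inspecting (\ref{diff1}), the vector $\widetilde{\partial}^H_1(q,gH^1)$ is a sum over $x\in\mathrm{supp}(\widetilde{\partial}_1 q)=\mathrm{supp}(\partial_1 q)$ of the terms $(\widetilde{\partial}_1)_{x,q}\,(x,g\gamma^S[q,x]H^1)$, whose target basis vectors carry pairwise distinct $X$-labels; hence no two terms collide and no coefficient summation or cancellation occurs, so every nonzero entry of $\widetilde{\partial}^H_1$ equals some original coefficient $(\widetilde{\partial}_1)_{x,q}$, which is odd because $\widetilde{C}$ is support-preserving. The same reasoning applies to (\ref{diff2}), the sum running over $q\in\mathrm{supp}(\widetilde{\partial}_2 z)=\mathrm{supp}(\partial_2 z)$ with target basis vectors of pairwise distinct $Q$-label. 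Consequently the support of each $\widetilde{\partial}^H_i$ coincides with that of its reduction $\partial^H_i$ modulo $2$, i.e.\ with a parity-check matrix of $C'$, and all nonzero entries of $\widetilde{\partial}^H_i$ are odd integers. Thus $\widetilde{C^H}$ modifies only the nonzero coefficients of the parity-check matrices of $C'$, turning them into odd integers, which is exactly the assertion.

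The main obstacle is the middle step of the first paragraph: justifying that $p_H\colon L_H\to L$ really realizes the Tanner-lift $C'$ rather than some other code. This hinges on the isomorphism $\pi_1(L)\cong\pi_1(\mathcal{K}(C))$ being compatible with the identification of checks and qubits on the two sides, and that compatibility is precisely what the support-preserving hypothesis buys us: it forces the $2$-cells trivialising $\pi_1(G_z)$ to play the same role as the cone $\mathrm{C}\,\mathcal{T}_z$ (cf.\ Table~\ref{tab:Tanner-cone complex vs Cellular realization}), so that the cellular realization carries no extra second homotopy or homology separating the two notions of lift. Without this hypothesis the two procedures genuinely differ, as witnessed by the code of Appendix~\ref{section A CSS code admitting no support preserving Z-lift}; once the identification is in hand, the remainder is the bookkeeping on (\ref{diff1}) and (\ref{diff2}) carried out above.
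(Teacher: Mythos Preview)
Your proposal is correct and follows essentially the same approach as the paper: both invoke the isomorphism $\pi_1(L)\cong\pi_1(\mathcal{K}(C))$ (valid under the support-preserving hypothesis) to realize the given Tanner-lift $C'$ as a cellular-lift, and then read off the $\mathbb{Z}$-coefficients from the procedure of Section~\ref{section definition of the cellular-lift}. Your version is more explicit than the paper's two-line proof, in particular you spell out why the resulting lift is support-preserving by checking that the sums in (\ref{diff1}) and (\ref{diff2}) hit pairwise distinct basis vectors so that no coefficient collision or cancellation occurs; the paper leaves this implicit.
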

\begin{proof}
    This is a direct consequence of the isomorphism between the fundamental group of $L\in \mathcal{L}(C)$ and $\mathcal{T}(C)$ (the Tanner cone-complex). We can therefore find a cellular-lift of $C$ equivalent to any Tanner-lift of $C$. The lifted code is first given with $\mathbb Z$ coefficients, following the procedure of Section \ref{section definition of the cellular-lift}.
\end{proof}

For the class of codes discussed above finding a support-preserving $\mathbb{Z}$-lift is a straightforward task, as outlined in Section~\ref{Section Application: cellular-lifts of hypergraph-product codes} for the case of hypergraph-product codes (HPC) and lifted-product codes (LPC). However, for other types of codes, such as those constructed by lifting a small base code, as introduced in~\cite{GuemardLiftIEEE}, identifying such a lift can be more challenging. In these cases, determining whether a support-preserving $\mathbb{Z}$-lift exists for the base code may require a nontrivial analysis of the code structure. Nevertheless, if a support-preserving $\mathbb{Z}$-lift does exist for the base code, then the procedure described in Section~\ref{section definition of the cellular-lift} provides an efficient and systematic method to compute a corresponding lift for any of its Tanner-lift.\par

However, when such a support-preserving $\mathbb{Z}$-lift does not exist, the fundamental group of the Tanner cone-complex may differ from that of the corresponding cellular realization. For instance, consider the case where the $\mathbb{Z}$-lifted matrix $\widetilde H_Z$ contains a row, corresponding to a check $z \in Z$, that has support on every qubit, while $ H_Z$ is sparse. In this situation, the associated graph $\widetilde{T}_z$ coincides with the full lifted Tanner graph $\widetilde{T}$. Trivializing the fundamental group of $\widetilde{T}_z$ would then amount to trivializing the fundamental group of the entire cellular realization of the code, thereby obstructing the possibility of performing a nontrivial code lift.\par

To summarize, the family of codes obtained through the cellular-lifting procedure of a CSS code crucially depends on several choices made during the construction of its cellular realization. These include the choice of a $\mathbb{Z}$-lift and the specific 2-cells used to attach the 5-cells. In contrast, these choices are absent in the Tanner-cone complex, making it a more canonical object for the purpose of code lifting.

\subsection{Classification of lifts}\label{section Classification of lifts}

In this section, we intend to bring a partial classication of the codes that can be obtained by cellular-lifting a CSS code $C$. As discussed in the previous section,  the result of code lifting depends on the 
specific cellular realization considered. However, we show that two cellular realizations obtained from identical $\mathbb Z$-lift and with identical (connected) 1-skeletons yield the same set of codes by cellular-lifting. This is captured in the following theorem, where we recall that all elements of a set $\mathcal L(C)$ are obtained from an identical $\mathbb Z$-lift.

\begin{theorem}\label{theorem invariance of codes}
Given two cellular realizations $L_1$ and $L_2$ of $\mathcal L (C)$ with identical 1-skeletons, to any finite cover of $L_1$ corresponds one of $L_2$ such that the resulting codes are equal.
\end{theorem}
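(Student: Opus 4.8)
The plan is to show that the two cellular realizations $L_1$ and $L_2$, sharing the same connected $1$-skeleton $L^1$ and the same $\mathbb{Z}$-lift $\widetilde{C}$, give rise to the same family of lifted codes, by establishing a correspondence at the level of fundamental groups and then promoting it to a correspondence of covering spaces. The first step is to compare the $2$-skeleta. By construction (Section~\ref{section Cellular realization of a code}), each $2$-cell in $L_i$ is attached along a simple cycle in $L^1$ of the prescribed form $f = [x_1,q_1].[q_1,x_2]\cdots[q_k,x_1]$, and the homotopy class of such a loop in $\pi_1(L^1)$ depends only on the edge-path, not on $i$; moreover each such $2$-cell lies either in a dressed core $c_z$ (trivializing $\pi_1(G_z)$) or is one of the $Z$-type cells killing a generator coming from the disconnectedness of $G_z$. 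Since the $\mathbb{Z}$-lift is fixed, the graphs $G_z$ and $\widetilde{T}_z$ are the same for $L_1$ and $L_2$, so the \emph{set of homotopy classes of loops} being killed is the same; hence the normal closures coincide and $\pi_1(L_1) \cong \pi_1(L_2)$ canonically as quotients of $\pi_1(L^1)$. This yields, via the map $\phi$ of Section~\ref{section definition of the cellular-lift}, that $H^1 = \phi_1^{-1}(H) = \phi_2^{-1}(H)$ for any subgroup, so the $1$-skeleton covers $L^1_{H^1}$ built from $L_1$ and from $L_2$ are literally identical.

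The second step upgrades this to the full covers. Given a finite cover $p_1 : L_{1,H} \to L_1$, classified by a subgroup $H \le \pi_1(L_1)$ via Theorem~\ref{Theorem Galois correspond}, let $H$ also denote the corresponding subgroup of $\pi_1(L_2)$ under the identification above, and let $p_2 : L_{2,H} \to L_2$ be the associated cover. I would argue that the truncated cellular chain complexes
\[
\mathcal{C}_5(L_{i,H},\mathbb{Z}_2) \xrightarrow{\partial_5^H} \mathcal{C}_4(L_{i,H},\mathbb{Z}_2) \xrightarrow{\partial_4^H} \mathcal{C}_3(L_{i,H},\mathbb{Z}_2)
\]
agree for $i=1,2$. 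The key point is that the data entering the explicit formulas (\ref{diff1}) and (\ref{diff2}) for the lifted boundary maps depend only on: the $\mathbb{Z}$-lift $\widetilde{C}$ (fixed), the spanning tree $S$ of $L^1$ (which we are free to choose once, the same for both), the graphs $G_z$ and the spanning trees $S_z$ of $A_z^1 \subseteq L^1$ (the same, since $A_z^1$ is a subgraph of the common $L^1$), and the group elements $\gamma^{S_z}(q_z,q) \in \pi_1(L^1)$ together with their images $\phi(\gamma^{S_z}(q_z,q))$ in the common quotient. Since $L_1$ and $L_2$ have $3$-, $4$-, and $5$-cells indexed by the same sets $X$, $Q$, $Z$ with the same incidence data from $\widetilde{C}$, and all the ``twisting'' information is recorded in $L^1$ and in the identical quotient map $\phi$, the formulas produce identical matrices. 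Thus the lifted codes coincide, and the correspondence $p_1 \leftrightarrow p_2$ is the desired bijection.

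The main obstacle I anticipate is making the second step fully rigorous: one must be sure that the cellular chain complex of the cover $L_{i,H}$ in dimensions $3,4,5$ is genuinely determined by the $1$-skeleton plus the attaching data, i.e.\ that the $3$- and $5$-cells (which in $L_i$ come from retracting dressed handles $h_x$, $h_z$ whose internal $1$- and $2$-cell structure \emph{does} depend on the realization) contribute boundary maps depending only on $\widetilde{C}$ and on the cycles in $L^1$ along which the relevant $2$-cells are attached. In other words, the potential difference between $L_1$ and $L_2$ lives entirely in the auxiliary $0$-, $1$-, and $2$-cells internal to the dressed handles, and one must check that these differences are ``invisible'' to the degree-$3$-to-$5$ part of the chain complex of the cover --- equivalently, that the subcomplexes $A_z$ are simply connected in both realizations (as asserted in Section~\ref{section definition of the cellular-lift}) so that each lifts to $|\pi_1(L)/H|$ disjoint copies in the same way. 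I would handle this by invoking the simple-connectedness of the closures of the $5$-dressed handles and of the $A_z$, established by construction, together with the fact that a cover restricted to a simply connected subcomplex is trivial; once this is in place, the identification of the lifted boundary maps reduces to comparing the combinatorial formulas (\ref{diff1})--(\ref{diff2}), which manifestly see only the shared data.
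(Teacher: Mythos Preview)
Your proof is correct and takes a somewhat different, more direct route than the paper. The paper proceeds by a constructive interpolation: starting from a cover $\tilde{L}_1 \to L_1$, it replaces the $2$-cells of $\overline{T}_{1,z}$ by those of $\overline{T}_{2,z}$ one $Z$-check at a time, invoking Proposition~\ref{Proposition consequence of Van Kampen} at each step to check that the fundamental group of the cover is preserved and that the incidence between lifted $5$-cells and lifted $4$-cells does not change. You instead argue in one shot: once $\pi_1(L_1)$ and $\pi_1(L_2)$ are identified as the \emph{same} quotient of $\pi_1(L^1)$ (your first step, which is exactly the content of Lemma~\ref{theorem invariance of fundamental group}), the explicit formulas~(\ref{diff1})--(\ref{diff2}) for the lifted boundary maps depend only on the common $1$-skeleton, the common quotient map $\phi$, and shared choices of spanning trees, so the two covers produce identical codes. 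Your approach has the advantage of making clear that the whole theorem is already encoded in those formulas; the paper's approach is more geometric and makes the bijection of covers explicit.

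One small correction: the graph $G_z$ is \emph{not} determined by the $\mathbb{Z}$-lift alone---it depends on the choice of edge-pairing at each $X$-vertex (Section~\ref{section Structure of M_ZQX and 5-handles}), and different realizations may make different choices. What is genuinely fixed is $\widetilde{T}_z$. The point you need (and which the paper's proof of Lemma~\ref{theorem invariance of fundamental group} spells out) is that any cycle basis used to build $\overline{T}_{i,z}$ generates all of $\pi_1(\widetilde{T}_z)$, so its image in $\pi_1(L^1)$ has the same normal closure regardless of $i$. Your conclusion about normal closures is correct; just anchor the justification to $\widetilde{T}_z$ rather than $G_z$.
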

By symmetry of the statement, it is sufficient to consider connected coverings over a single cellular realization of $C$ to obtain all its lifts.\par

The key step in the proof of Theorem \ref{theorem invariance of codes} is the invariance of the fundamental group under the choice of 2-cells. Before stating the result, we observe that $L_1 $ and $L_2$ have the same $1$-skeletons, but for each $z\in Z$, the cycle bases chosen to trivialize the fundamental group of the subcomplex $\widetilde T_z\hookrightarrow L_i^1$, for $i=1,2$,  are different. We consider the complex composed of the embedded graph $\widetilde T_z$, together with the 2-cells added to make it simply-connected, and denote it by $\overline{ T}_{i,z}\hookrightarrow L_i^2$.

\begin{lemma}\label{theorem invariance of fundamental group}
Let $L_1$ and $L_2$ be two elements of $\mathcal L(C)$ with identical 1-skeletons. Then, $\pi_1(L_1)\cong\pi_1(L_2)$.
\end{lemma}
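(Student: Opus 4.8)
The plan is to use the fact that the fundamental group of a CW complex depends only on its $2$-skeleton, so it suffices to prove $\pi_1(L_1^2)\cong\pi_1(L_2^2)$. Since $L_1$ and $L_2$ have the same $1$-skeleton $L^1$ (the $\mathbb{Z}$-lifted Tanner graph of $C_X$ together with the $Z$-type $1$-cells), van Kampen's theorem gives $\pi_1(L_i^2)\cong\pi_1(L^1)/N_i$, where $N_i\trianglelefteq\pi_1(L^1)$ is the normal closure of the set of boundary words of the $2$-cells of $L_i$. The whole statement thus reduces to proving $N_1=N_2$.

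First I would group the $2$-cells of $L_i$ according to the $Z$-check whose processing introduced them, writing $\overline{T}_{i,z}\hookrightarrow L_i^2$ for the corresponding subcomplex: its $1$-skeleton is the subgraph $\widetilde T_z$ (together with the $Z$-type $1$-cells attached to $z$), which lies in the common $L^1$ and hence is the same graph $\Gamma_z$ for $i=1,2$; and its $2$-cells are exactly those added to trivialize $\pi_1$ of that subgraph, so that $\overline T_{i,z}$ is simply connected by construction. Since every $2$-cell of $L_i$ belongs to some $\overline T_{i,z}$, and the normal closure of a union of sets equals the product of their normal closures, we get $N_i=\prod_{z\in Z}\langle\langle R_{i,z}\rangle\rangle_{\pi_1(L^1)}$, where $R_{i,z}$ is the set of boundary words of the $2$-cells of $\overline T_{i,z}$, viewed (up to conjugacy) in $\pi_1(L^1)$.

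The crux is a purely group-theoretic observation: if $\iota\colon F\hookrightarrow G$ is a subgroup and $R\subseteq F$ normally generates $F$, then $\langle\langle\iota(R)\rangle\rangle_G=\langle\langle\iota(F)\rangle\rangle_G$, and in particular this group does not depend on the choice of $R$. The inclusion $\langle\langle\iota(R)\rangle\rangle_G\subseteq\langle\langle\iota(F)\rangle\rangle_G$ is immediate; conversely, every $f\in F$ is a finite product of $F$-conjugates of elements of $R\cup R^{-1}$, and applying $\iota$ exhibits $\iota(f)$ as a product of $G$-conjugates of elements of $\iota(R)\cup\iota(R)^{-1}$, whence $\iota(F)\subseteq\langle\langle\iota(R)\rangle\rangle_G$. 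Applying this with $F=\pi_1(\Gamma_z)$ (a free group, since $\Gamma_z$ is a graph), $G=\pi_1(L^1)$, $\iota$ induced by inclusion, and $R=R_{i,z}$ — which normally generates $F$ precisely because $\overline T_{i,z}$ is simply connected — shows that $\langle\langle R_{i,z}\rangle\rangle_{\pi_1(L^1)}=\langle\langle\iota(\pi_1(\Gamma_z))\rangle\rangle_{\pi_1(L^1)}=:N_z$ depends only on $z$. Hence $N_1=\prod_{z\in Z}N_z=N_2$, and $\pi_1(L_1)\cong\pi_1(L^1)/N_1=\pi_1(L^1)/N_2\cong\pi_1(L_2)$.

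I expect the main obstacle to be pinning down the fact that the $2$-cells attached for a given $z$ normally generate all of $\pi_1(\widetilde T_z)$, and recognising that this is exactly the point where the hypothesis ``$\overline T_{i,z}$ is simply connected'' is indispensable: without it (for instance, if a $2$-cell were attached along the square of a cycle rather than the cycle itself) the normal closures $N_1$ and $N_2$ could genuinely differ. One should also be slightly careful that the bookkeeping of $2$-cells by $Z$-checks is exhaustive, so that no defining relation of $L_i$ is omitted from $\prod_z N_z$. The remaining ingredients — reduction to the $2$-skeleton, van Kampen, and the elementary fact that the normal closure of a union equals the product of the normal closures — are routine.
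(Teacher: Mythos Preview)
Your argument is correct and is essentially the paper's own proof: both reduce to the $2$-skeleton, invoke van Kampen (the paper via Hatcher's Proposition~1.26), and use that the boundary words of the $2$-cells of $\overline T_{i,z}$ normally generate $\pi_1(\widetilde T_z)$---precisely because $\overline T_{i,z}$ is simply connected---so the normal subgroup they determine in the fundamental group of the common $1$-skeleton is independent of $i$. The only difference is organizational: the paper interpolates by swapping the $2$-cells of one $z\in Z$ at a time and checking that the fundamental group is preserved at each step, whereas you establish $N_1=N_2$ in one stroke via the clean group-theoretic observation $\langle\langle\iota(R)\rangle\rangle_G=\langle\langle\iota(F)\rangle\rangle_G$ whenever $R$ normally generates $F$.
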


To prove this lemma, we will repeatidly use the following proposition, stated without proof, which follows from Van Kampen Theorem.

\begin{proposition}[\cite{HatcherTopo}, Proposition 1.26]\label{Proposition consequence of Van Kampen}
\begin{enumerate}
    \item[(i)] Let $A$ and $X$ be a $2$ dimensional cell complexes, $X$ obtained from $A$ by attaching a set of $2$-cells $\{c_k\}$, with a common basepoint $x_0$. Then the inclusion map $(A,x_0)\hookrightarrow (X,x_0)$ induces a surjective homomorphism $\pi_1(A,x_0)\rightarrow\pi_1(X,x_0)$ whose kernel is the normal subgroup $N\trianglelefteq \pi_1(X,x_0)$ generated by the loops induced by $\{\partial c_k\}$ based at $x_0$ via a path in $A$. 
    \item[(ii)] If $X$ is obtained from $A$ by attaching $n$-cells for a fixed $n > 2$, then the inclusion $A \hookrightarrow X$ induces an isomorphism $\pi_1(A, x_0) \cong \pi_1(X, x_0).$ 
\end{enumerate}

\end{proposition}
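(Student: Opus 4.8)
The plan is to derive both parts from the Seifert--van Kampen theorem using a single device: present $X$ as $A$ with open $n$-cells glued on, and cover $X$ by an open set $U$ that deformation retracts onto $A$ together with the open cells themselves. It suffices to treat \emph{one cell at a time}. If $\{c_k\}_{k\in I}$ is the (here finite) collection of attached cells, fix an enumeration and set $X_0=A$, $X_j=X_{j-1}\cup c_{k_j}$, so that $X=X_{|I|}$; since every loop in $X$ and every nullhomotopy of a loop is compact, it meets only finitely many cells, so $\pi_1(X,x_0)$ is the colimit of the $\pi_1(X_j,x_0)$, and the asserted description is stable under this iteration. Hence we may assume $X=A\cup_\varphi D^n$ for a single attaching map $\varphi\colon S^{n-1}\to A$, with $n\ge 2$.

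Pick an interior point $y\in\operatorname{int}D^n$, and let $U=X\setminus\{y\}$ and $V=\operatorname{int}D^n$. Both are open, $U\cup V=X$, $V$ is an open $n$-ball hence contractible, and the radial retraction of $D^n\setminus\{y\}$ onto $\partial D^n$ induces a deformation retraction $U\simeq A$. The intersection $U\cap V=\operatorname{int}D^n\setminus\{y\}$ deformation retracts onto a small $(n-1)$-sphere around $y$; in particular it is path-connected for $n\ge2$, so van Kampen applies once we choose a basepoint in $U\cap V$ and afterwards transport it to $x_0\in A$ along a path in $U$. This transport is exactly what produces the conjugating paths appearing in the statement of (i).

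For part (i), $n=2$: here $\pi_1(V)=1$, $\pi_1(U\cap V)\cong\mathbb Z$, and $\pi_1(U)\cong\pi_1(A,x_0)$, and under the retraction $U\simeq A$ a generator of $\pi_1(U\cap V)$ maps to the class of the attaching loop $\varphi$ based at $x_0$ through the chosen connecting path, say $[\gamma\cdot\varphi\cdot\gamma^{-1}]$. Van Kampen gives $\pi_1(X,x_0)\cong\pi_1(A,x_0)\ast_{\mathbb Z}1$, i.e.\ the quotient of $\pi_1(A,x_0)$ by the normal subgroup generated by $[\gamma\cdot\varphi\cdot\gamma^{-1}]$; this is precisely the surjection $\pi_1(A,x_0)\twoheadrightarrow\pi_1(X,x_0)$ with the claimed kernel. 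Iterating over $\{c_k\}$ accumulates one relator per cell and yields the normal closure $N$ generated by $\{[\partial c_k]\}$; the result is independent of the auxiliary connecting paths, since changing such a path replaces a relator by a conjugate of itself, which already lies in $N$. For part (ii), $n\ge3$: now $U\cap V\simeq S^{n-1}$ is simply connected, so van Kampen gives $\pi_1(X,x_0)\cong\pi_1(U)\ast_{1}\pi_1(V)\cong\pi_1(A,x_0)\ast 1\cong\pi_1(A,x_0)$, the isomorphism being induced by the inclusion; iterating over all the $n$-cells preserves this, which proves (ii).

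The main obstacle is bookkeeping rather than conceptual. The naive cover of $X$ by $U$ and the disjoint union $\bigsqcup_k V_k$ of all open cells has $U\cap\bigsqcup_k V_k$ disconnected, so one must either attach cells one at a time as above, or invoke the many-set form of van Kampen with a basepoint chosen coherently in every piece; one must then verify that the relators produced are exactly the conjugated attaching loops $[\partial c_k]$, up to the ambiguity of the connecting paths which, as noted, disappears in the normal closure. For an infinite family of cells the colimit remark of the first paragraph is needed; it is immediate from compactness of $S^1$ and $S^1\times[0,1]$, and in the applications of this paper all complexes are finite, so the finite iteration already suffices (see also \cite{HatcherTopo}).
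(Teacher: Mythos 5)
Your argument is correct. Note, however, that the paper does not prove this statement at all: it is quoted verbatim from Hatcher (Proposition 1.26) and explicitly ``stated without proof,'' with only the remark that it follows from the van Kampen theorem. Your write-up supplies exactly the standard derivation that remark alludes to: attach one cell at a time, cover $X=A\cup_\varphi D^n$ by $U=X\setminus\{y\}\simeq A$ and $V=\operatorname{int}D^n$, observe that $U\cap V\simeq S^{n-1}$ is path-connected for $n\ge 2$ and simply connected for $n\ge 3$, and read off the amalgamated product. The bookkeeping points you flag (basepoint transport into $U\cap V$, independence of the normal closure from the choice of connecting paths, the colimit step for infinitely many cells) are precisely the right ones, and all are handled adequately; for the finite complexes used in this paper the colimit step is indeed unnecessary. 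For what it is worth, Hatcher's own proof avoids the one-cell-at-a-time induction by first thickening $X$ to a homotopy-equivalent space $Y$ obtained by attaching a strip from the basepoint into each cell, so that a single application of van Kampen (with one open set per cell) treats all cells simultaneously; your inductive version is marginally more elementary and equally rigorous, at the cost of the colimit remark in the infinite case.
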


\begin{proof}
We construct a sequence of complexes interpolating between $L_1^2$ and $L_2^2$ by modifying $\overline{T}_{1,z}$ into $\overline{T}_{2,z}$ one $Z$-check at a time.\par
Let $L_1^2(z_1^-)$ be the complex obtained from $L_1^2$ by removing all the 2-cells in $\overline{T}_{1,z_1}$ and $L_1^2(z_1)$ obtained by attaching to $L_1^2(z_1^-)$ the 2-cells according to the configuration in $\overline{T}_{2,z_1}$.

We denote $B_1$ and $B_2$ the two cycle bases that generate the fundamental group of $\widetilde T_z$, and used to obtain $\overline{T}_{1,z}$ and $\overline{T}_{2,z}$. Their image under the homomorphism $\pi_1(\widetilde T_z)\to \pi_1(L_1^2(z_1^-))$, induced by the inclusion, therefore generates the same normal subgroup $N\trianglelefteq \pi_1(L_1^2(z_1^-))$.\par

Since the subcomplex $\overline{T}_{i,z_1}$ are both simply connected, we can use part (i) of Proposition \ref{Proposition consequence of Van Kampen} to obtain the isomorphisms $\pi_1(L_1^2(z_1))\cong \pi_1(L_1^2(z_1^-))/N\cong \pi_1(L_1^2)$. Fixing a spanning tree of $L_1^1$ defines a set of generators for the fundamental group of $L_1$. Then, the isomorphism relates two different presentations of the same group, on the same set of generators.\par

Let $L_1^2(z_1,z_2^-)$ be the complexes obtained from $L_1^2(z_1)$ by removing the 2-cells of $\overline{T}_{1,z_2}$ and let $L_1^2(z_1,z_2)$ be obtained by  attaching to $L_1^2(z_1,z_2^-)$ the 2-cells according to the configuration of $\overline{T}_{2,z_2}$. Repeating the procedure above, we can show that $\pi_1(L_1^2(z_1,z_2))\cong \pi_1(L_1^2(z_1))$.\par

We continue the sequence of isomorphism of fundamental group until we arrive at the complex \[L_1^2(z_1,z_2,\dots,z_{|Z|})= L_2^2, \]
yielding $\pi_1(L^2_2)\cong \pi_1(L^2_1)$. We conclude the result by attaching higher dimensional cells, which do not change the fundamental group by part (ii) in Proposition \ref{Proposition consequence of Van Kampen}.
\end{proof}

We can now prove Theorem \ref{theorem invariance of codes}.

\begin{proof}
Let $p_{1}:\tilde L_{1}\rightarrow L_1$ be a connected cover of $L_1$ of index $t$. We can modify it to an index-$t$ connected cover of $ L_2$ by the following steps.\par

For $z_1\in Z$, the subcomplex $\overline{T}_{1,z_1}$ is simply-connected and we can therefore lift it $\tilde L_{1}^2$ into $t$ disjoint subcomplexes. Then, let $\tilde L_{1}^2(z_1^-)\rightarrow L_1^2(z_1)$ be the covering map obtained by replacing all the 2-cells according to the configuration of $\overline{T}_{2,z_1} $ and lifting them in $\tilde L_{1}^2$. Since each lift of this subcomplex is simply connected, this doesn't change the fundamental group by part (i) of Proposition \ref{Proposition consequence of Van Kampen} and $\pi_1(\tilde L_{1}^2(z_1))\cong \pi_1(\tilde L_{1}^2)$ and the isomorphism is the one in the proof of Lemma \ref{theorem invariance of fundamental group}. Moreover, it does not change the incidence between any element of $p^{-1}_{1}(z_1)$ and its neighboring qubits (4-cells), so that the underlying lifted code is unchanged by this procedure.\par

We continue this procedure for all $z\in Z$ and end up with a valid covering $p_{2}:\tilde L_{2}\rightarrow L_2$ such that $\pi_1(\tilde L_{1})\cong \pi_1(\tilde L_{2})$, equal 1-skeletons $\tilde L_{1}^1=\tilde L_{2}^1$ and equal resulting codes.

\end{proof}

For completeness, we expose a stronger relation of homotopy equivalence between cellular realizations verifying an extra condition.

\begin{lemma}\label{homotopy equivalence}
Let $L_1$ and $L_2$ be two cellular realizations of $C$ such that the subcomplexes in $\{\overline{T}_{i,z}: z\in Z\}$, $i=1,2$, are all contractible. Then $L_1 \simeq L_2$.
\end{lemma}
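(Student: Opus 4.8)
\textbf{Proof proposal for Lemma \ref{homotopy equivalence}.}

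The plan is to upgrade the argument of Lemma \ref{theorem invariance of fundamental group} from a statement about fundamental groups to a statement about homotopy type, by exploiting the stronger contractibility hypothesis on the subcomplexes $\overline{T}_{i,z}$. The key observation is that when a space $X$ is obtained from a subspace $A$ by attaching cells along a map into a \emph{contractible} subcomplex, one can manipulate the attachment freely up to homotopy. Concretely, I would first recall the standard fact (a consequence of the homotopy extension property for CW pairs, see \cite{HatcherTopo}) that if $A \hookrightarrow X$ is a cofibration and $A$ is contractible, then the quotient map $X \to X/A$ is a homotopy equivalence. More generally, if $f, g : A \to Y$ are two attaching maps that are homotopic, then $Y \cup_f (\text{cells})$ and $Y \cup_g (\text{cells})$ are homotopy equivalent rel $Y$.

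The main steps, carried out one $Z$-check at a time exactly as in the proof of Lemma \ref{theorem invariance of fundamental group}, would be: First, fix $z \in Z$ and consider the common $1$-skeleton $L_1^1 = L_2^1$ together with the embedded graph $\widetilde{T}_z$. Both $\overline{T}_{1,z}$ and $\overline{T}_{2,z}$ are contractible complexes built by attaching $2$-cells to $\widetilde{T}_z$; hence each inclusion $\widetilde{T}_z \hookrightarrow \overline{T}_{i,z}$ is a cofibration with contractible target, and collapsing $\overline{T}_{i,z}$ inside $L_i^2$ gives a homotopy equivalence. Second, since both collapses have homotopy-equivalent (indeed homeomorphic) targets — the quotient of $L_1^2 = L_2^2$-on-the-$1$-skeleton by the image of $\widetilde{T}_z$ is the same space regardless of which cycle basis was used — we obtain $L_1^2(\text{after modifying } z) \simeq L_1^2$. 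Third, iterate over all $z \in Z$ to interpolate between $L_1^2$ and $L_2^2$, obtaining $L_1^2 \simeq L_2^2$. Finally, attach the higher-dimensional cells ($3$-, $4$-, and $5$-cells): since the $3$- and $4$-cells and the $5$-cells are attached identically in $L_1$ and $L_2$ (only the $2$-cell configurations differ, as noted before Lemma \ref{theorem invariance of fundamental group}), and since attaching cells is functorial up to homotopy under a homotopy equivalence of the skeleton, the homotopy equivalence $L_1^2 \simeq L_2^2$ extends to $L_1 \simeq L_2$.

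The main obstacle I anticipate is the bookkeeping in the third and fourth steps: one must check that the homotopy equivalences $L_1^2(z_1,\dots,z_j) \simeq L_1^2(z_1,\dots,z_{j-1})$ can be chosen to be the identity on the part of the $1$-skeleton that is relevant for attaching the remaining higher cells — i.e., that the equivalences are rel a suitable subcomplex — so that the compositions are well-defined and the extension to the $3$-, $4$-, $5$-skeleta goes through. This is where the argument is genuinely more delicate than the fundamental-group version: an isomorphism of groups only needs a common generating set, whereas a homotopy equivalence must be compatible with \emph{all} subsequent attaching maps. The cleanest way to handle this is to observe that each modification is supported in a neighborhood of $\overline{T}_{i,z}$ which is disjoint (away from $\widetilde{T}_z \subset L^1$) from the cells used to build the other $\overline{T}_{i,z'}$ and from the cores of the higher cells, so the homotopy equivalences can be taken to be identity outside that neighborhood and hence composed freely; then a final appeal to the fact that homotopy equivalences of CW pairs extend over attached cells finishes the proof.
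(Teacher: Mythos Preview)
Your proposal is correct and follows essentially the same approach as the paper: collapse each contractible $\overline{T}_{i,z}$ using Hatcher's Proposition~0.17, observe that the two quotients coincide, and iterate over $z\in Z$ to interpolate from $L_1$ to $L_2$. In fact you are more explicit than the paper about the extension to the $3$-, $4$-, and $5$-cells and about the rel-subcomplex bookkeeping; the paper's proof simply writes ``$L_1(z_1)\simeq L_1$'' and iterates without discussing these points.
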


\begin{proof}
Let $L_1^2(z_1)$ be the complex  obtained from $L_1^2$ by removing all the 2-cells in $\overline{T}_{1,z_1}$, and reorganizing them according to the configuration of $\overline{T}_{2,z_1}$. Assuming that each $\overline{T}_{i,z}$ is a contractible subcomplex,  $L_1^2\rightarrow L_1^2/\overline{T}_{1,z_1}$ and  $L_1^2(z_1)\rightarrow L_1^2(z_1)/\overline{T}_{2,z_1}$ are homotopy equivalences\footnote{Proposition 0.17 from Hatcher states that for a CW pair $(X, A)$ consisting of a complex $X$ and a contractible subcomplex $A$, the quotient map $X\rightarrow X/A$ is a homotopy equivalence.}. But $L_1^2/\overline{T}_{1,z_1}=L_1^2(z_1)/\overline{T}_{2,z_1}$, and therefore $L_1(z_1)\simeq L_1$.\par

Similarly, let $L_1^2(z_1,z_2)$ be the complex obtained from $L_1^2(z_1)$ by removing all the 2-cells in the closure of $\overline{T}_{2,z_1}$, and reorganizing them according to the configuration of $\overline{T}_{2,z_2}$. Repeating the procedure above we can show that $L_1(z_1,z_2)\simeq L_1(z_1)\simeq L_1$.\par

We continue the sequence of homotopy equivalence until we arrive at the complex \[L_1(z_1,z_2,\dots,z_{|Z|})= L_2, \]
yielding the result $L_2\simeq L_1$.
\end{proof}

\subsection{Application: cellular-lifts of hypergraph-product codes}\label{Section Application: cellular-lifts of hypergraph-product codes}

As previously noted, a non-trivial cellular-lift of a quantum CSS code can only exist if the associated cellular realization is not simply connected. Consequently, in practice, one should seek sparse $\mathbb{Z}$-lifts, and ideally support-preserving $\mathbb{Z}$-lifts, as discussed in Section~\ref{section Relation between the cellular-lift and the Tanner-lift}. In this section, we analyze our construction in the context of hypergraph product codes (HPCs), which are known to admit support-preserving $\mathbb{Z}$-lifts.\par

The main objective is to prove Theorem~\ref{Theorem classification HPC}, which classifies the cellular-lifts of any HPC. As discussed in Section~\ref{section Relation between the cellular-lift and the Tanner-lift}, the existence of support-preserving $\mathbb{Z}$-lifts implies that the cellular-lift construction is equivalent to the Tanner-lift construction, and the latter were classified in~\cite{GuemardLiftIEEE}. \par

Despite this equivalence, we choose to present a detailed, self-contained approach that relies exclusively on the geometric object and methods introduced in Sections~\ref{section Cellular realization of a code} and~\ref{section definition of the cellular-lift}.\par

In the rest of this section, $C:=C_1\otimes C_2^*$ denotes a HPC constructed from two classical codes $C_1$ and $C_2$ specified by their parity-check matrices $H_1$ and $H_2$ and represented by their Tanner graphs $T_1$ and $T_2$, respectively. Then, $C$ has binary parity-check matrices \[H_X=\left[H_1\otimes \mathbb 1 | \mathbb 1 \otimes H_2^T\right ], \qquad H_Z=\left[\mathbb 1 \otimes  H_2|  H_1 ^T \otimes   \mathbb 1\right ].\]
We consider the following $\mathbb{Z}$-lifted matrix $\tilde{H}_X$, obtained by replacing each entry $0$ or $1$ in the binary matrix $H_X$ with $0$ or $1$ in $\mathbb{Z}$, respectively\footnote{This is referred to as the \emph{naive lift} in~\cite{Freedman2020CSS_Manifold}.}. We proceed similarly for the $\mathbb{Z}$-lifted matrix $\tilde{H}_Z$, except that each $1$ in the right block is replaced with $-1$. Informally, the lifted matrices are given by
\[
\tilde{H}_X = \left[ H_1 \otimes \mathbb{1} \,\middle|\, \mathbb{1} \otimes H_2^T \right], \qquad 
\tilde{H}_Z = \left[ \mathbb{1} \otimes H_2 \,\middle|\, - H_1^T \otimes \mathbb{1} \right],
\]
where the entries are understood to lie in $\mathbb{Z}$. Note that this lift satisfies $\tilde{H}_X \tilde{H}_Z^T = 0$, ensuring the chain complex condition. Furthermore, this $\mathbb{Z}$-lift is support-preserving. Moreover, $\mathcal L(C)$ is the set of all cellular realizations obtained from this specific $\mathbb Z$-lift.

\begin{theorem}\label{Theorem classification HPC}
The cellular-lift of $C$ are classified by the subgroups of $\pi_1(T_1)\times \pi_1(T_2)$.
\end{theorem}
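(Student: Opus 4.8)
The plan is to compute $\pi_1(L)$ for a cellular realization $L\in\mathcal L(C)$ of the hypergraph-product code and show it is isomorphic to $\pi_1(T_1)\times\pi_1(T_2)$; the theorem then follows immediately from the Galois correspondence (Theorem~\ref{Theorem Galois correspond}) together with Theorem~\ref{theorem invariance of codes}, which guarantees that the classification does not depend on the particular cellular realization chosen, only on its $1$-skeleton. So the work reduces to a $\pi_1$ computation, carried out skeleton-by-skeleton as in Section~\ref{section definition of the cellular-lift}.

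First I would identify the $1$-skeleton $L^1$. It is the $\mathbb Z$-lifted Tanner graph of $\widetilde H_X=[H_1\otimes\mathbb 1\,|\,\mathbb 1\otimes H_2^T]$ together with the $Z$-type $1$-cells; since the chosen $\mathbb Z$-lift is support-preserving and (one checks) each $\widetilde T_z$ is connected, there are no $Z$-type $1$-cells, so $L^1$ is exactly the Tanner graph of the product code. Writing $T_i$ with bipartition into bit-vertices $B_i$ and check-vertices $\mathrm{Ch}_i$, the qubit set is $Q=(B_1\times B_2)\sqcup(\mathrm{Ch}_1\times\mathrm{Ch}_2)$ and the $X$-check set is $X=\mathrm{Ch}_1\times B_2$; the incidence structure shows that $L^1$ is precisely the graph underlying the product cell structure on $T_1\times T_2$ restricted to cells of dimension $\le 1$ — i.e. $L^1=(T_1\times T_2)^{(1)}$, the $1$-skeleton of the product CW complex. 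Hence $\pi_1(L^1)$ is the free group whose rank is the number of independent cycles of $(T_1\times T_2)^{(1)}$.

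Next I would pass to $L^2$. The $2$-cells come in two flavours: those attached for the $Z$-checks $z\in Z$ (the $2$-skeletons of the dressed cores $c_z$), and, in the product picture, the natural square $2$-cells $\sigma_1\times\sigma_2$ of $T_1\times T_2$ with $\dim\sigma_1=\dim\sigma_2=1$. The crucial claim is that, up to the freedom in choosing cycle bases that Lemma~\ref{theorem invariance of fundamental group} tells us is irrelevant, attaching the $2$-cells associated to each $z$ has exactly the effect of filling in these product squares: indeed $\widetilde T_z$ for $z=(c,b)\in\mathrm{Ch}_1\times B_2$ from the left block $\mathbb 1\otimes H_2$, resp.\ $z$ indexed by the right block $H_1^T\otimes\mathbb 1$, is a "cross" of edges in $T_1\times T_2$ whose fundamental cycles are precisely the boundaries of product squares incident to a fixed row/column, and trivializing $\pi_1(\widetilde T_z)$ amounts to coning off those squares. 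Since $2$-cells of dimension $\ge 3$ in the product (there are none among the $4$- and $5$-cells that affect $\pi_1$, by part~(ii) of Proposition~\ref{Proposition consequence of Van Kampen}) do not change $\pi_1$, we get $\pi_1(L)\cong\pi_1((T_1\times T_2)^{(2)})\cong\pi_1(T_1\times T_2)\cong\pi_1(T_1)\times\pi_1(T_2)$, the last isomorphism by the standard product formula. Combining with Theorem~\ref{Theorem Galois correspond} and Theorem~\ref{theorem invariance of codes} finishes the proof.

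The main obstacle I anticipate is the second paragraph of that argument: making precise, with the correct signs from the support-preserving $\mathbb Z$-lift, the identification of the $2$-cells used to kill $\pi_1(\widetilde T_z)$ with the boundaries of the product squares of $T_1\times T_2$, and checking that the ambiguity in the choice of fundamental cycle basis for each $\widetilde T_z$ is absorbed by Lemma~\ref{theorem invariance of fundamental group} so that the resulting fundamental group is genuinely well-defined. One must verify that the cycles coming from $\widetilde T_z$ generate exactly the same normal subgroup of $\pi_1(L^1)$ as the product-square boundaries do; once that normal-subgroup equality is established the computation of $\pi_1$ is routine, and the rest is bookkeeping with Van Kampen via Proposition~\ref{Proposition consequence of Van Kampen}.
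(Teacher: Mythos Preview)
Your high-level strategy coincides with the paper's: show $\pi_1(L)\cong\pi_1(T_1)\times\pi_1(T_2)$ for some $L\in\mathcal L(C)$, then invoke the Galois correspondence together with Theorem~\ref{theorem invariance of codes}. The gap lies in the skeleton identification.

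The claim $L^1=(T_1\times T_2)^{(1)}$ is false. The $0$-cells of $L$ are the $X$-vertices and the $Q$-vertices only; the $Z$-checks are $5$-cells, not $0$-cells. Writing $V_i=V_{ib}\sqcup V_{ic}$ for the bipartition of $T_i$, one finds that $L^1$ has vertex set $(V_{1c}\times V_{2b})\cup(V_{1b}\times V_{2b})\cup(V_{1c}\times V_{2c})$ and edge set $(E_1\times V_{2b})\cup(V_{1c}\times E_2)$: it is $(T_1\times T_2)^{(1)}$ with the $Z$-vertices $V_{1b}\times V_{2c}$ and all incident edges removed (this is the formula appearing in the proof of Lemma~\ref{Lemma retract on K}). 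Consequently the $2$-cells of $L^2$ cannot be the product squares $e_1\times e_2$: each such square has a $Z$-vertex $(b_1,c_2)$ on its boundary, and that vertex is absent from $L^1$. For $z=(b_1,c_2)\in V_{1b}\times V_{2c}$ the induced subgraph $\widetilde T_z$ is a bipartite grid on qubits $\{b_1\}\times N(c_2)\ \cup\ N(b_1)\times\{c_2\}$ and $X$-checks $N(b_1)\times N(c_2)$ (not a ``cross''), and the $2$-cells trivializing its fundamental group are attached along length-$8$ cycles of this grid. Your indexing $z\in\mathrm{Ch}_1\times B_2$ also points to the $X$-set rather than the $Z$-set.

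The paper bridges the mismatch between $L^2$ and $T_1\times T_2$ not by identifying their skeleta but by exhibiting a common deformation retract $K$: one chooses a spanning forest in each $T_i$, and collapsing the corresponding product edges retracts both complexes onto the same $K$ (Lemma~\ref{Lemma retract on K}). This gives $L^2\simeq T_1\times T_2$ for a specific choice of $2$-cells, whence $\pi_1(L)\cong\pi_1(T_1)\times\pi_1(T_2)$; Lemma~\ref{theorem invariance of fundamental group} then transfers the conclusion to every cellular realization. A direct Van Kampen comparison of the normal closures, as you propose, could be carried out, but it would have to work with the length-$8$ cycles in the actual $L^1$ rather than with the product-square boundaries, and establishing that these generate the correct kernel is essentially the content of the retraction argument.
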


Let us recall that an equivalent version of this theorem, formulated in terms of coverings over the Tanner cone-complex, was used in~\cite{GuemardLiftIEEE} to demonstrate that the asymptotically good quantum LDPC codes introduced in~\cite{Panteleev2020, Panteleev2021} can be obtained as lifts of a small hypergraph product code (HPC). In the same spirit, we derive a similar corollary in the context of cellular-lifts. The proof closely follows that of Corollary~4.8 in~\cite{GuemardLiftIEEE}.

\begin{corollary}
There exist a code $C$ (an HPC), a cellular realization $L\in\mathcal{L}(C)$, and an explicit family $\{C_n\}_\mathbb{N}$ with parameter $[[n,\Theta(n),\Theta(n)]]$ which is obtained by generating coverings over $L$.
\end{corollary}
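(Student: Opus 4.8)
The plan is to invoke Theorem~\ref{Theorem classification HPC} to reduce the construction of the family $\{C_n\}$ to the construction of an explicit family of subgroups of $\pi_1(T_1)\times\pi_1(T_2)$ with the right asymptotics, and then to transport the known result about asymptotically good quantum LDPC codes from~\cite{Panteleev2020,Panteleev2021} into the language of cellular-lifts. First I would fix the base code $C=C_1\otimes C_2^\ast$ to be a small (constant-size) hypergraph product code built from two suitable classical base codes, chosen exactly as in the good-code constructions of~\cite{Panteleev2021}; for instance, $T_1$ and $T_2$ should be the Tanner graphs whose fundamental groups surject onto (or are free enough to contain) the relevant finite group $G$ used as the lift group in the lifted-product construction. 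Since $T_1$ and $T_2$ are finite connected graphs, $\pi_1(T_i)$ is a finitely generated free group, so there is no obstruction to finding homomorphisms $\pi_1(T_1)\times\pi_1(T_2)\to G\times G$ onto the appropriate quotient; the kernels give the sequence of subgroups $H_n$, hence via Theorem~\ref{Theorem classification HPC} a sequence of cellular-lifts $C_n$ with $|Q(C_n)|=|Q(C)|\cdot[\pi_1(L):H_n]$ growing linearly in the index.

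Next I would verify that the codes $C_n$ obtained this way coincide, up to the equivalence established in Section~\ref{section Relation between the cellular-lift and the Tanner-lift} (support-preserving $\mathbb{Z}$-lift $\Rightarrow$ cellular-lift $=$ Tanner-lift) and in~\cite{GuemardLiftIEEE}, with the lifted product codes of~\cite{Panteleev2020,Panteleev2021}. Concretely, the dictionary is: the balanced-product / fiber-bundle description of the cellular-lift (mentioned at the end of Section~\ref{section definition of the cellular-lift}) matches the balanced product presentation of the good codes, with the deck group $\pi_1(L)/H_n$ playing the role of the lift group. Once this identification is made, the parameters $[[n,\Theta(n),\Theta(n)]]$ are inherited \emph{verbatim} from~\cite{Panteleev2021}: dimension $\Theta(n)$ and distance $\Theta(n)$ are theorems about those codes, and the code length scales linearly with the covering degree because each cell of $L$ lifts to $[\pi_1(L):H_n]$ cells. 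This is precisely the structure of the proof of Corollary~4.8 in~\cite{GuemardLiftIEEE}, which I would follow closely.

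The main obstacle is the identification step: one must check carefully that the specific $\mathbb{Z}$-lift of the HPC fixed in this section (the naive lift, with the sign flip in the right block) produces a cellular realization $L$ whose fundamental group is genuinely $\pi_1(T_1)\times\pi_1(T_2)$ and not something smaller — i.e.\ that no unwanted relations are introduced when the $2$-cells trivializing the $\pi_1(G_z)$ are attached. This is exactly what Theorem~\ref{Theorem classification HPC} asserts, so strictly speaking it is already done; the remaining care is to confirm that the \emph{connected} covering corresponding to $H_n$ yields the good code rather than a disjoint union of smaller codes, which amounts to checking that the chosen $H_n$ has the correct index and that the quotient $\pi_1(L)/H_n$ acts transitively — automatic when $H_n$ is the kernel of a surjection onto a finite group. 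A secondary, more bookkeeping-level point is to make the family \emph{explicit}: this requires writing down explicit generators of $\pi_1(T_1)$ and $\pi_1(T_2)$ (edges outside a chosen spanning tree) and an explicit description of the surjections onto the lift groups used in~\cite{Panteleev2021}, which is routine given that those groups and their generating sets are themselves given explicitly there.
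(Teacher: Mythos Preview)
Your approach is correct and matches the paper's, which does not give a detailed argument but simply states that the proof follows Corollary~4.8 of~\cite{GuemardLiftIEEE}; your outline (Theorem~\ref{Theorem classification HPC} $+$ support-preserving $\mathbb{Z}$-lift $\Rightarrow$ cellular-lift $=$ Tanner-lift $+$ import the parameters from~\cite{Panteleev2021}) is exactly that. One small imprecision worth tightening: the relevant surjection is onto $G$, not $G\times G$ --- the lifted-product code with lift group $G$ arises from the kernel of a map $\pi_1(T_1)\times\pi_1(T_2)\to G$ of the form $(a,b)\mapsto\phi_1(a)\phi_2(b)^{-1}$, so that the deck group is $G$ (as you yourself note later); the kernel of a surjection onto $G\times G$ would instead produce the plain tensor product of the two classical lifts, which is not the balanced product and is not known to be good.
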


The proof of Theorem \ref{Theorem classification HPC} mainly relies on the following Lemma.

\begin{lemma}\label{Lemma retract on K}
There exist $L\in\mathcal{L}(C)$, and a complex $K$, such that $L^2$ and  $T_1\times T_2$ both retract onto $K$ via an explicit retraction. In particular, $L^2$ is homotopy equivalent to $T_1\times T_2$.
\end{lemma}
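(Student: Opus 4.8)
The plan is to build the complex $K$ explicitly as a $2$-dimensional complex carrying the "same combinatorial data" as $T_1 \times T_2$, and then exhibit deformation retractions onto it from both sides. I would start from the product cell complex $T_1 \times T_2$: its $0$-cells are pairs of vertices, its $1$-cells are of two types (edge of $T_1$ times vertex of $T_2$, or vertex of $T_1$ times edge of $T_2$), and its $2$-cells are squares $e_1 \times e_2$ with $e_i$ an edge of $T_i$. Recall that vertices of $T_i$ split into "bit" vertices and "check" vertices, and this induces the standard HPC partition: the qubits $Q$ of $C$ correspond to (bit,bit) and (check,check) pairs, the $X$-checks to (check,bit) pairs, and the $Z$-checks to (bit,check) pairs. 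The $\mathbb{Z}$-lift fixed above is precisely the one for which $\widetilde{H}_X, \widetilde{H}_Z$ read off the signed incidence of this square complex, so the squares $e_1 \times e_2$ are exactly the $2$-cells that would be attached to kill $\pi_1$ of each $\widetilde{T}_z$ in a suitable $L \in \mathcal{L}(C)$.

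Concretely, I would argue as follows. First, identify $\widetilde{T}_z$, for $z$ the $Z$-check indexed by a pair $(b, c)$ with $b$ a bit vertex of $T_1$ and $c$ a check vertex of $T_2$: the qubits in its support are the (bit,bit) pairs $(b, b')$ with $b'$ adjacent to $c$, together with the (check,check) pairs $(c', c)$ with $c'$ adjacent to $b$; and one checks directly that $\widetilde{T}_z$ is isomorphic to the subdivided "cross" $\{b\} \times \mathrm{star}_{T_2}(c) \;\cup\; \mathrm{star}_{T_1}(b) \times \{c\}$ sitting inside $T_1 \times T_2$, hence connected. Its first homology is generated by the boundary cycles of the squares $e_1 \times e_2$ with $e_1$ incident to $b$ and $e_2$ incident to $c$ — these squares are the $2$-cells of $T_1 \times T_2$ meeting the cross in a full boundary loop. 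So I take $L$ to be the cellular realization whose $2$-cells attached for each $z$ are exactly these square boundaries, plus the $Q$-type $1$-cells; and I take $K$ to be the subcomplex of $T_1 \times T_2$ consisting of all vertices, all product $1$-cells, and all square $2$-cells — which is in fact all of $T_1 \times T_2$ once one notes the product has no cells above dimension $2$. The retraction $T_1 \times T_2 \to K$ is then the identity, and the content is the retraction $L^2 \to K$: one collapses each dressed core $c_q$ of the $4$-cell $q=(b,b')$ (a punctured $4$-sphere, diagrammatically the star-graph of Figure~\ref{fig:puncturedsphere}(d)) onto its central vertex, identifying $q$'s extra $0$- and $1$-cells with the single product $1$-cell data, and likewise contracts the $2$-skeleton of each $h_z$ onto the corresponding square-spanned subcomplex via the projection $\rho : G_z \to \widetilde{T}_z$ discussed at the end of Section~\ref{section Cellular realization of a code}. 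Composing the two retractions with a homotopy inverse gives $L^2 \simeq T_1 \times T_2$.

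The main obstacle I expect is bookkeeping the retraction $L^2 \to K$ so that it is genuinely a deformation retraction of CW complexes and not merely a homology isomorphism: one must verify that after retracting every $c_q$ to a point the $Q$-type $1$-cells of $L$ match the product $1$-cells of $T_1\times T_2$ with the right endpoints and signs, and — more delicately — that the $2$-cells of $L^2$ attached to kill $\pi_1(G_z)$ project under $\rho$ onto exactly the square boundaries of $K$ (with some $2$-cell boundaries becoming degenerate or identified, as warned in Section~\ref{section Cellular realization of a code}), so that the collapse carries $L^2$ onto $K$ with no leftover cells and no loss of $2$-cells. This requires choosing the cellular realization $L$ carefully (i.e. choosing the graphs $G_z$ and their $2$-cell fillings compatibly with the product structure), which is exactly why the statement asserts existence of \emph{some} $L \in \mathcal{L}(C)$ rather than a property of all of them. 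Once the CW-level retraction is pinned down, the homotopy equivalence $L^2 \simeq T_1 \times T_2$ is immediate, and Theorem~\ref{Theorem classification HPC} will follow by combining it with the Galois correspondence (Theorem~\ref{Theorem Galois correspond}), $\pi_1(T_1 \times T_2) \cong \pi_1(T_1) \times \pi_1(T_2)$, and Lemma~\ref{theorem invariance of fundamental group} together with the fact (Proposition~\ref{Proposition consequence of Van Kampen}(ii)) that the higher cells of $L$ do not affect $\pi_1$.
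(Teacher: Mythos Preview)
There is a genuine gap: your identification of $\widetilde{T}_z$ is wrong, and everything built on it fails. By definition (Section~\ref{section Structure of M_ZQX and 5-handles}) $\widetilde{T}_z$ is the subgraph of the qubit--$X$-check Tanner graph induced by the qubits in $\operatorname{supp}(z)$ \emph{together with every $X$-check adjacent to them}. For $z=(b,c)$ those $X$-checks are precisely the pairs $(c',b')$ with $c'\in N_{T_1}(b)$ and $b'\in N_{T_2}(c)$, so $\widetilde{T}_z$ is a bipartite ``grid'' on $|N(b)|+|N(c)|$ qubit-vertices and $|N(b)|\cdot|N(c)|$ $X$-vertices, with first Betti number $(|N(b)|-1)(|N(c)|-1)$. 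It is \emph{not} the cross $\{b\}\times\mathrm{star}(c)\cup\mathrm{star}(b)\times\{c\}$, which is a tree; that cross is the closed star of the $Z$-vertex inside $T_1\times T_2$, a different object living in a different complex. Consequently the boundary of a product square $e_1\times e_2$ is a $4$-cycle passing through the $Z$-vertex $(b,c)$, which is not a vertex of $L^1$ at all, so these square boundaries cannot serve as attaching circles and your proposed $L$ is not an element of $\mathcal{L}(C)$. For the same reason the choice $K=T_1\times T_2$ cannot work: the $Z$-vertices and their incident edges are absent from $L^2$, hence $T_1\times T_2$ is not a subcomplex of $L^2$ and there is no deformation retraction $L^2\to T_1\times T_2$. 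The ``collapse each $c_q$ to its centre'' map you describe would identify every qubit with all of its neighbouring $X$-checks, which is not what you want.

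The paper's argument is structurally different. It selects forests $E_1^{\mathrm f}\subset E_1$ (one edge per check of $T_1$) and $E_2^{\mathrm f}\subset E_2$ (one edge per bit of $T_2$) with no shared endpoints, and defines $K$ as the quotient of $T_1\times T_2$ obtained by collapsing the product edges $E_1^{\mathrm f}\times V_2$ and $V_1\times E_2^{\mathrm f}$; since a product of retractions is a retraction, this is a deformation retract of $T_1\times T_2$. On the other side, it builds a specific $L\in\mathcal{L}(C)$ whose $(|N(b)|-1)(|N(c)|-1)$ two-cells for each $z$ are attached along $8$-cycles in $\widetilde{T}_z$, each chosen to run through two parallel copies of a forest edge in each direction, so that the \emph{same} edge-collapses applied to $L^2$ also land on $K$. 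The forest is the idea you are missing: it simultaneously disposes of the $Z$-vertices (present only in $T_1\times T_2$) and of the extra $X$-check grid structure of $\widetilde{T}_z$ (present only in $L^2$), producing a genuine common retract strictly smaller than both.
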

\begin{figure}[t]
  \centering
  \includegraphics[scale=0.8]{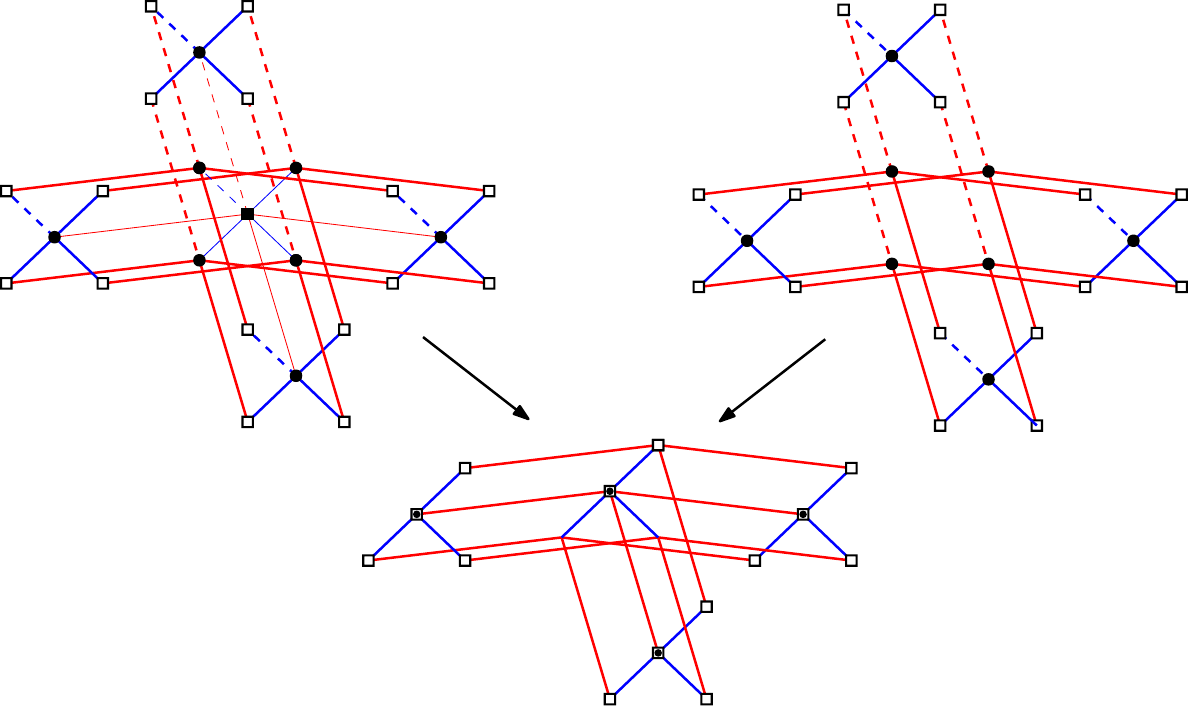}
  \caption{The top left figure represents a local view (the closure of the star) of a $z$ vertex in the complex $T_1\times T_2$, The top right figure represents a local view $\overline{T}_z$ of a $z$ cell in $L\in \mathcal{L}(C)$, where $L$ is the complex considered in Lemma \ref{Lemma retract on K}. Black square, black disks and empty square correspond respectively to $Z$ vertex, bit vertices and $X$ vertices. Red edges and blue edges correspond respectively to horizontal edges (those of $T_1$ )and vertical edges (Those of $T_2$). The thin edges correspond to $Z$-type edges that do not exist in $L$. The black arrow is the retract of the dashed edges, defining a deformation retract of the subcomplex onto a common subcomplex $K_z$ of $K$. The choice of dashed edges results from a choice of spanning forest as described in the proof of Lemma \ref{Lemma retract on K}. The squares containing black dotes represent a $X$ and a bit vertex that are identified after the retraction.}
  \label{fig:retraction HPC on K}
\end{figure}
\begin{proof}
Throughout, by \emph{retraction}, we mean a \emph{deformation retraction}. We outline the steps required to deformation retract both the 2-skeleton $L^2$ and the product complex $T_1 \times T_2$ onto a common subcomplex $K$.\par

We begin by observing that there exists a local configuration of 2-cells in $L^2$ such that the edges of both $L^2$ and $T_1 \times T_2$ can be locally retracted, yielding locally identical subcomplexes. This is illustrated in Figure~\ref{fig:retraction HPC on K}. In order to retract an edge, its closure must be simply connected. However, after performing several local retractions, this property may no longer hold globally. Consequently, we must explain how to make a consistent global choice of edges in order to define a valid deformation retraction across the entire complex.\par

We begin with the product complex $T_1 \times T_2$. Let $(E_i,V_{ib},V_{ic})$ denote the set of edges, bit vertices, and check vertices in $T_i$, for $i = 1, 2$. We define the set of \emph{horizontal edges} of $T_1 \times T_2$ as
\[
E_\rightarrow := E_1 \times V_2,
\]
and the set of \emph{vertical edges} as
\[
E_\uparrow := V_1 \times E_2.
\]
Recall that $C_2^*$ is the code obtained from $C_2$ by interchanging checks and bits. As such, a $Z$-vertex in $T_1 \times T_2$ is an element of $V_{1b} \times V_{2c}$, and an $X$-vertex is an element of $V_{1c} \times V_{2b}$.\par

To construct a valid retraction, we proceed as follows. For each check vertex in $T_1$, select one adjacent edge, and for each bit vertex in $T_2$, select one adjacent edge, such that the corresponding sets of selected edges, denoted $E_1^{\operatorname{f}} \subset E_1$ and $E_2^{\operatorname{f}} \subset E_2$, do not share endpoints in $T_1$ and $T_2$, respectively. These marked edges are to be contracted in a subsequent step to construct the desired deformation retraction.\par

Then for each $z$ vertex $(v_1,v_2)\in V_{1b}\times V_{2c}$, there exists $e\in E_1^{\operatorname{f}}$ such that for all $v\in N(v_2)\cup (v_1,v_2)$, the horizontal edge $(e , v)\in \overline{\operatorname{star}}(v_1,v_2)$, where $N(v_2)$ corresponds to the set of vertices adjacent to $v_2$ in $V_{2b}$, and $\overline{\operatorname{star}}$ is the closure of the star. Similarly, there exists $e\in E_2^{\operatorname{f}}$ such that for all $v\in N(v_1) \cup (v_1,v_2)$, the vertical edge $(v , e)\in \overline{\operatorname{star}}(v_1,v_2)$, where $N(v_1)$ corresponds to the set of vertices adjacent to $v_1$ in $V_{1c}$. In other words, we have a forest of $T_1 $ and $T_2$, such that all edges of the form $ E_1^{\operatorname{f}}\times V_2$ and $V_1\times  E_2^{\operatorname{f}}$ in the closure of the star of $(v_1,v_2)$ can be contracted as described on Figure \ref{fig:retraction HPC on K}. We recall that the product of retractions gives a retraction of the product and we hence obtain a deformation retract from $T_1\times T_2$ to a complex $K$.\par

Let $L^1 = (T_1 \times T_2)^1 \setminus \big( V_{1b} \times V_{2c} \cup V_{1b} \times E_2 \big)$. We can complete $T_z$ into a 2-dimensional complex $\overline{T}_z$ by attaching each face along a simple path formed of 8 edges, which must go through one horizontal edge of $E^{\text{f}}_1 \times V_{2b}$ and one vertical edge of $V_{1b} \times E^{\text{f}}_2$. This determines the path completely, and this constraint forces the boundary to go through exactly two parallel copies of the marked vertical and horizontal edges. It is possible to attach exactly $(|N(v_1)| - 1) \cdot (|N(v_2)| - 1)$ faces of this form on $T_z$, such that the retraction of marked edges forms a simply connected subcomplex $K_z$ as described in Figure \ref{fig:retraction HPC on K}, and hence $\overline{T}_z$ is itself simply connected. Repeating this construction of $\overline{T}_z$ for evert $z \in Z$ and retracting horizontal edges of the form $E^{\text{f}}_1 \times V_2$ and vertical edges of the form $V_1 \times E^{\text{f}}_2$ also yields the complex $K$. This concludes that $T_1 \times T_2 \simeq L^2$.
\end{proof}

\begin{example}
We suggest that the reader apply this procedure in the case where $C$ is the toric code~\cite{Kitaev2003}. In this setting, the 2-skeleton of the complex $L \in \mathcal{L}(C)$ corresponds to the standard cellulation of the torus, but with an additional vertex inserted for each qubit, effectively splitting each edge into two. The complex $T_1 \times T_2$ exhibits a product structure and the simplest spanning forest has on edge out of two in both $T_1$ and $T_2$.
\end{example}

\begin{lemma}\label{lemma fundamental group of product}
Let $L\in\mathcal{L}(C)$ be any cellular realization of $C$. Then it satisfies  $\pi_1 (L)\cong\pi_1(T_1)\times\pi_1(T_2)$.
\end{lemma}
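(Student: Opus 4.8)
My plan is to reduce the computation to the distinguished cellular realization $L_0\in\mathcal L(C)$ produced by Lemma~\ref{Lemma retract on K}, and then transport the answer to an arbitrary $L\in\mathcal L(C)$ through Lemma~\ref{theorem invariance of fundamental group}. For that transport step to be available, every element of $\mathcal L(C)$ must have the same $1$-skeleton, so I would start there. Recall that $L^1$ consists of the $\mathbb Z$-lifted Tanner graph $\widetilde T$ of $C_X$ together with the $Z$-type $1$-cells, the latter being the $1$-handles that join the connected components of the graphs $G_z$, $z\in Z$. With the naive $\mathbb Z$-lift fixed before the statement, all nonzero entries of $\widetilde H_X$ and $\widetilde H_Z$ are $\pm 1$, hence $\widetilde T_z^{\,\mathrm{mult}}=\widetilde T_z$, and — as one checks directly from the product structure of the HPC — every $X$-vertex of $\widetilde T_z$ has degree $2$, carrying exactly one positively and one negatively signed incident edge. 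The sign-pairing that defines $G_z$ is therefore forced, so $G_z\cong\widetilde T_z$, which is the (connected) incidence graph of the grid $N(v_1)\times N(v_2)$ appearing in the proof of Lemma~\ref{Lemma retract on K}. Consequently no $Z$-type $1$-cell is ever created, and $L^1=\widetilde T$ independently of the choice of $L\in\mathcal L(C)$.

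Next I would compute $\pi_1(L_0)$ for the $L_0$ of Lemma~\ref{Lemma retract on K}. That lemma gives $L_0^2\simeq T_1\times T_2$, so $\pi_1(L_0^2)\cong\pi_1(T_1\times T_2)\cong\pi_1(T_1)\times\pi_1(T_2)$, using the standard identification of the fundamental group of a product (here $T_1$ and $T_2$ are graphs, so both factors are free). The cells of $L_0$ in dimensions $3$, $4$ and $5$ do not change the fundamental group, by part~(ii) of Proposition~\ref{Proposition consequence of Van Kampen}, whence $\pi_1(L_0)\cong\pi_1(L_0^2)\cong\pi_1(T_1)\times\pi_1(T_2)$.

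Finally, since the given $L$ and the distinguished $L_0$ both lie in $\mathcal L(C)$ and, by the first step, have identical $1$-skeletons, Lemma~\ref{theorem invariance of fundamental group} yields $\pi_1(L)\cong\pi_1(L_0)\cong\pi_1(T_1)\times\pi_1(T_2)$, which is the assertion. The main obstacle is precisely that first step: one must verify, using the combinatorics specific to hypergraph products, that the naive $\mathbb Z$-lift forces each $G_z$ to be connected, for otherwise the $Z$-type $1$-cells would make the $1$-skeleton depend on $L$ and Lemma~\ref{theorem invariance of fundamental group} could not be invoked. Everything after that is a routine assembly of Lemma~\ref{Lemma retract on K}, Proposition~\ref{Proposition consequence of Van Kampen}(ii), and Lemma~\ref{theorem invariance of fundamental group}.
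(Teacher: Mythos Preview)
Your proof is correct and follows the same route as the paper: invoke Lemma~\ref{Lemma retract on K} for a distinguished $L_0$, pass from $L_0^2$ to $L_0$ via Proposition~\ref{Proposition consequence of Van Kampen}(ii), and then transport to an arbitrary $L$ via Lemma~\ref{theorem invariance of fundamental group}. The paper's proof is terser and leaves implicit the verification you carry out in your first paragraph, namely that with the chosen support-preserving $\mathbb Z$-lift every $G_z$ is forced to coincide with $\widetilde T_z$ and is connected, so that all $L\in\mathcal L(C)$ share the same $1$-skeleton; this is exactly the hypothesis needed to apply Lemma~\ref{theorem invariance of fundamental group}, and your explicit check is a welcome clarification.
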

\begin{proof}
There exists a cellular realization of $C$ homotopy equivalent to $T_1 \times T_2$, therefore having a fundamental group isomorphic to $\pi_1(T_1) \times \pi_1(T_2)$. By Theorem \ref{theorem invariance of fundamental group}, any other cellular realization has an isomorphic fundamental group.
\end{proof}

We are now ready for the proof of Theorem \ref{Theorem classification HPC}.

\begin{proof}
Let $L \in \mathcal{L}(C)$ be any cellular realization of $C$. By Lemma \ref{lemma fundamental group of product}, the fundamental group of $L$ satisfies the isomorphism $\pi_1(L) \cong \pi_1(T_1) \times \pi_1(T_2)$. From the Galois correspondence, Theorem \ref{Theorem Galois correspond}, it follows that coverings of $L$ are classified by subgroups of the fundamental group. Finally, by the invariance result, Theorem \ref{theorem invariance of codes}, all lifts of $C$ can be obtained from coverings over any single cellular realization in $\mathcal{L}(C)$. Therefore, subgroups of $\pi_1(T_1) \times \pi_1(T_2)$ provide a complete classification of such lifts.
\end{proof}

\section*{Acknowledgement}

The author would like to thank Benjamin Audoux and Anthony Leverrier for suggesting the investigation of manifold coverings and for their insightful discussions throughout this work. We also thank Matthew Hastings and Guanyu Zhu for their valuable feedback. We acknowledge the Plan France 2030 through the project NISQ2LSQ ANR-22-PETQ-0006.

\bibliographystyle{alpha}

\begin{thebibliography}{DEL{\etalchar{+}}21}

\bibitem[AC19]{Audoux2015}
Benjamin Audoux and Alain Couvreur.
\newblock {On tensor products of CSS codes}.
\newblock {\em Annales de l’Institut Henri Poincaré D, Combinatorics,
  Physics and their Interactions}, 6(2):239–287, March 2019.

\bibitem[BE21a]{Breuckmann2021}
Nikolas~P. Breuckmann and Jens~N. Eberhardt.
\newblock {Balanced Product Quantum Codes}.
\newblock {\em {IEEE} Transactions on Information Theory}, 67(10):6653--6674,
  October 2021.

\bibitem[BE21b]{Breuckmann2021review}
Nikolas~P. Breuckmann and Jens~Niklas Eberhardt.
\newblock {Quantum Low-Density Parity-Check Codes}.
\newblock {\em {PRX} Quantum}, 2(4), October 2021.

\bibitem[CS96]{Calderbank1996}
A.~R. Calderbank and Peter~W. Shor.
\newblock {Good quantum error-correcting codes exist}.
\newblock {\em Physical Review A}, 54(2):1098--1105, August 1996.

\bibitem[DEL{\etalchar{+}}22]{Dinur2021}
Irit Dinur, Shai Evra, Ron Livne, Alexander Lubotzky, and Shahar Mozes.
\newblock Locally testable codes with constant rate, distance, and locality.
\newblock In {\em Proceedings of the 54th Annual ACM SIGACT Symposium on Theory of Computing}, STOC ’22, page 357–374. ACM, June 2022.

\bibitem[FH21]{Freedman2020CSS_Manifold}
Michael Freedman and Matthew Hastings.
\newblock {Building manifolds from quantum codes}.
\newblock {\em Geometric and Functional Analysis}, 31(4):855–894, June 2021.

\bibitem[FM01]{Freedman2001}
Michael~H. Freedman and David~A. Meyer.
\newblock {Projective Plane and Planar Quantum Codes}.
\newblock {\em Foundations of Computational Mathematics}, 1(3):325--332, July
  2001.

\bibitem[FML02]{Freedman2002}
Michael~H. Freedman, David~A. Meyer, and Feng Luo.
\newblock {\em Z2-systolic freedom and quantum codes}, volume~1.
\newblock Chapman and Hall/{CRC}, Boca Raton, Floride, February 2002.

\bibitem[Gue25]{GuemardLiftIEEE}
Virgile Guemard.
\newblock Lifts of quantum css codes.
\newblock {\em IEEE Transactions on Information Theory}, pages 1--1, 2025.

\bibitem[Hat02]{HatcherTopo}
Allen Hatcher.
\newblock {\em {Algebraic topology}}.
\newblock Cambridge University Press, Cambridge, 2002.

\bibitem[hes]{MathoverflowZ-Lift167615}
David E~Speyer (https://mathoverflow.net/users/297/david-e speyer).
\newblock {Lift chain complex from $\mathbb{F}_2$ to $\mathbb{Z}$}.
\newblock MathOverflow.
\newblock URL:https://mathoverflow.net/q/167615 (version: 2014-05-19).

\bibitem[HHO21]{Hastings2020}
Matthew~B. Hastings, Jeongwan Haah, and Ryan O'Donnell.
\newblock {Fiber bundle codes: breaking the n1/2 polylog(n) barrier for Quantum
  LDPC codes}.
\newblock In {\em Proceedings of the 53rd Annual ACM SIGACT Symposium on Theory
  of Computing}, STOC 2021, page 1276–1288, New York, NY, USA, 2021.
  Association for Computing Machinery.

\bibitem[Kit03]{Kitaev2003}
A.Yu. Kitaev.
\newblock {Fault-tolerant quantum computation by anyons}.
\newblock {\em Annals of Physics}, 303(1):2--30, January 2003.

\bibitem[LP72]{BSMF_1972__100__337_0}
Fran\c{c}ois Laudenbach and Valentin Po\'enaru.
\newblock A note on 4-dimensional handlebodies.
\newblock {\em Bulletin de la Soci\'et\'e Math\'ematique de France},
  100:337--344, 1972.

\bibitem[LS01]{Lyndon2001}
Roger~C. Lyndon and Paul~E. Schupp.
\newblock {\em {Combinatorial Group Theory}}.
\newblock Springer Berlin Heidelberg, 2001.

\bibitem[LZ22]{Leverrier2022}
Anthony Leverrier and Gilles Z{\'e}mor.
\newblock {Quantum Tanner codes}.
\newblock In {\em {FOCS 2022 - IEEE 63rd Annual Symposium on Foundations of
  Computer Science}}, pages 872--883, Denver, United States, October 2022.
  {IEEE}.

\bibitem[LZ23]{leverrier2022decoding}
Anthony Leverrier and Gilles Zémor.
\newblock Decoding quantum tanner codes.
\newblock {\em IEEE Transactions on Information Theory}, 69(8):5100--5115,
  2023.

\bibitem[PK22a]{Panteleev2021}
Pavel Panteleev and Gleb Kalachev.
\newblock {Asymptotically good Quantum and locally testable classical LDPC
  codes}.
\newblock In {\em Proceedings of the 54th Annual ACM SIGACT Symposium on Theory
  of Computing}, STOC ’22. ACM, June 2022.

\bibitem[PK22b]{Panteleev2020}
Pavel Panteleev and Gleb Kalachev.
\newblock Quantum ldpc codes with almost linear minimum distance.
\newblock {\em IEEE Transactions on Information Theory}, 68(1):213--229, 2022.

\bibitem[Por23]{portnoy23}
Elia Portnoy.
\newblock {Local Quantum Codes from Subdivided Manifolds},
\newblock{\em arXiv preprint arXiv:2303.06755}, 2023.

\bibitem[Sho95]{Shor1995}
Peter~W. Shor.
\newblock {Scheme for reducing decoherence in quantum computer memory}.
\newblock {\em Physical Review A}, 52(4):R2493--R2496, October 1995.

\bibitem[Ste96a]{Steane1996}
A.~M. Steane.
\newblock {Error Correcting Codes in Quantum Theory}.
\newblock {\em Physical Review Letters}, 77(5):793--797, July 1996.

\bibitem[Ste96b]{Stean1996Multiple}
A.~M. Steane.
\newblock {Multiple-particle interference and quantum error correction}.
\newblock {\em Proceedings of the Royal Society of London. Series A:
  Mathematical, Physical and Engineering Sciences}, 452(1954):2551--2577,
  November 1996.

\bibitem[TZ14]{Tillich2009}
Jean-Pierre Tillich and Gilles Zémor.
\newblock Quantum ldpc codes with positive rate and minimum distance proportional to the square root of the blocklength.
\newblock {\em IEEE Transactions on Information Theory}, 60(2):1193--1202, 2014.

\bibitem[Zas13]{zaslavsky2013matricestheorysignedsimple}
Thomas Zaslavsky.
\newblock Matrices in the theory of signed simple graphs,
\newblock{\em arXiv preprint arXiv:1303.3083}, 2013.

\bibitem[Zhu25]{zhu2025topologicaltheoryqldpcnonclifford}
Guanyu Zhu.
\newblock { A topological theory for qLDPC: non-Clifford gates and magic state
  fountain on homological product codes with constant rate and beyond the
  $n^{1/3}$ distance barrier.}
  
  \newblock{\em arXiv preprint arXiv:2501.19375},
\newblock 2025.

\bibitem[ZSP{\etalchar{+}}24]{zhu2024noncliffordparallelizablefaulttolerantlogical}
Guanyu Zhu, Shehryar Sikander, Elia Portnoy, Andrew~W. Cross, and Benjamin~J.
  Brown.
\newblock { Non-Clifford and parallelizable fault-tolerant logical gates on
  constant and almost-constant rate homological quantum LDPC codes via higher
  symmetries.}
  \newblock{\em arXiv preprint arXiv:2310.16982},
\newblock 2024.

\end{thebibliography}
\newcommand{\etalchar}[1]{$^{#1}$}

\appendix

\section{A CSS code admitting no support preserving $\mathbb Z$-lift}\label{section A CSS code admitting no support preserving Z-lift}

In this section, we present a code that does not admit a support-preserving $\mathbb{Z}$-lift. This example, originally given in~\cite{MathoverflowZ-Lift167615}, uses finite projective geometry. Here, we simply translate it into the language of coding theory. Background on quantum codes and projective geometry can be found in~\cite{Audoux2015}.\par

The CSS code we consider is composed of two linear codes, denoted $C_X$ and $C_Z$. The code $C_Z$ is defined by the point-line incidence structure of the projective plane $\mathbb{P}^2(\mathbb{F}_2)$. It is identified with the complex
\[
C_Z := \mathbb{F}_2^7 \xrightarrow[]{H_Z} \mathbb{F}_2^7,
\]
where the first vector space has a basis indexed by the points of the Fano plane, the second by its lines, and $H_Z$ is the (rank 4) line-point incidence matrix that maps each point to the three lines it lies on, i.e. \[H_Z=\begin{bmatrix}
1&0&1&0&0&0&1\\
0&1&0&0&1&0&1\\
1&0&0&0&1&1&0\\
0&1&1&0&0&1&0\\
1&1&0&1&0&0&0\\
0&0&1&1&1&0&0\\
0&0&0&1&0&1&1\\
\end{bmatrix}.\]
To define the code $C_X$, first notice that the seven points of the Fano plane may be labeled by the seven non-zero elements of $\mathbb{F}_2^3$, via a map denoted $\phi$. This can be done in such a way that for every two points $u$ and $v$, the third point $w$ of the line incident to $u$ and $v$ has label $\phi(w) = \phi(u) + \phi(v)$. The code $C_X$ is the linear code $\mathbb{F}_2^7 \xrightarrow[]{H_X} \mathbb{F}_2^3$, where $H_X$ is the matrix representation of $\phi$. Interestingly, $H_X$ is the parity-check matrix of the Hamming code.

\[ H_X=\begin{bmatrix}
1 & 1 & 1&0&1&0&0\\
1&1&0 & 0 & 0&1&1\\
1&0&1&1&0&1&0\\
\end{bmatrix}.\]

From the property of the labeling map $\phi$, the composition of the parity-check matrix yields $H_XH_Z^T=0$, and the chain complex $C:=\mathbb{F}_2^7 \xrightarrow[]{H_Z^T} \mathbb{F}_2^7 \xrightarrow[]{H_X} \mathbb{F}_2^3$ represents a valid CSS code.\par
Suppose there exists a support-preserving $\mathbb{Z}$-lift $\mathbb{Z}^7 \xrightarrow[]{\widetilde{H}_Z^T} \mathbb{Z}^7 \xrightarrow[]{\widetilde{H}_X} \mathbb{Z}^3$ of $C$. Then there exist 7 vectors in $\mathbb{Z}^3$, the images of the basis vectors of the second $\mathbb{Z}^7$, such that each line of the Fano plane gives three vectors with a linear relation between them. That is, the Fano plane would be realized in $\mathbb{P}^2(\mathbb{Q})$, which is not possible. Therefore, a support-preserving $\mathbb{Z}$-lift of $C$ does not exist.

\end{document}